\DeclareSymbolFont{usualmathcal}{OMS}{cmsy}{m}{n}
\DeclareSymbolFontAlphabet{\mathcal}{usualmathcal}
\DeclareMathOperator{\minimize}{minimize}
\DeclareMathOperator{\sbjto}{subject\,to}
\DeclareMathOperator{\trace}{trace}
\DeclareMathOperator{\sat}{sat}
\DeclareMathOperator{\rank}{rank}
\newcommand{\N}{\mathbb{N}}
\newcommand{\Nz}{\N_0}
\newcommand{\R}{\mathbb{R}}
\newcommand{\EE}{\mathsf{E}}
\newcommand{\PP}{\mathsf{P}}
\newcommand{\ol}{\overline}
\newcommand{\opt}{^\star}
\newcommand{\transp}{^\top}
\newcommand{\inverse}{^{-1}}
\newcommand{\psinv}{^+}
\newcommand{\Let}{\coloneqq}
\newcommand{\teL}{\eqqcolon}
\newcommand{\eps}{\varepsilon}
\newcommand{\rhpolicy}{\hat\pi}
\newcommand{\reachindex}{\kappa}
\newcommand{\reachab}{\mathcal{R}}
\newcommand{\setmin}{\setminus}
\newcommand{\drv}{\mathrm{d}}
\renewcommand{\ge}{\geqslant}
\renewcommand{\le}{\leqslant}
\renewcommand{\mapsto}{\longmapsto}
\newcommand{\ConSet}{\mathbb{U}}
\newcommand{\NoiseSet}{\mathbb{W}}
\newcommand{\lra}{\longrightarrow}
\newcommand{\join}{\sharp}
\newcommand{\abs}[1]{\left\lvert{#1}\right\rvert}
\newcommand{\norm}[1]{\left\lVert{#1}\right\rVert}
\newcommand{\epower}[1]{\mathrm{e}^{#1}}
\newcommand{\indic}[1]{\mathtt{1}_{#1}}
\newcommand{\secref}[1]{\S\ref{#1}}
\newcommand{\pmat}[1]{\begin{pmatrix}#1\end{pmatrix}}
\numberwithin{equation}{section}
\newtheorem{theorem}{Theorem}[]
\newtheorem{proposition}[theorem]{Proposition}
\theoremstyle{remark}
\newtheorem{Example}[theorem]{Example}
\newtheorem{remark}[theorem]{Remark}
\newtheoremstyle{dcdef}{8pt}{4pt}{}{}{\scshape}{.}{ }{\thmname{#1}\thmnote{ (\mdseries #3)}}
\theoremstyle{dcdef}
\newtheoremstyle{nonum}{8pt}{4pt}{}{}{\scshape}{.}{ }{\thmname{#1}\thmnote{ (\mdseries #3)}}
\theoremstyle{nonum}
\newtheorem{assumptionnn}{Assumption}
\newcommand{\AssumptionEnd}{\hspace{\stretch{1}}{$\diamondsuit$}}
\newcommand{\ExampleEnd}{\hspace{\stretch{1}}{$\triangle$}}
\newcommand{\RemarkEnd}{\hspace{\stretch{1}}{$\triangleleft$}}
\title[Stability and performance of stochastic predictive control]{Stability and performance of stochastic predictive control}
\author[D.\ Chatterjee]{Debasish Chatterjee}
\address{Systems \& Control Engg.\\ 104 SysCon\\ IIT Bombay, Powai\\ Mumbai 400076\\ India}
\urladdr{\url{http://www.sc.iitb.ac.in/~chatterjee}}
\email{dchatter@iitb.ac.in}
\author[J.\ Lygeros]{John Lygeros}
\address{Automatic Control Lab.\\ ETL I22, ETH Z\"urich\\ Physikstrasse 3\\ 8092 Z\"urich\\ Switzerland}
\urladdr{\url{http://control.ee.ethz.ch/~lygeros}}
\email{jlygeros@control.ee.ethz.ch}
\thanks{This work was carried out during D.\ Chatterjee's visit to the Automatic Control Laboratory, ETH Z\"urich, in June-August 2012, supported by the Swiss National Science Foundation via the Short International Visit Fellowship no.\ $\text{IZK0Z2}\_142780$.}
\date{19 April 2013}
\begin{document}

	\begin{abstract}
		This article is concerned with stability and performance of controlled stochastic processes under receding horizon policies. We carry out a systematic study of methods to guarantee stability under receding horizon policies via appropriate selections of cost functions in the underlying finite-horizon optimal control problem. We also obtain quantitative bounds on the performance of the system under receding horizon policies as measured by the long-run expected average cost. The results are illustrated with the help of several simple examples.
	\end{abstract}
	\maketitle

	\section{Introduction}
		With the steady growth in the availability of fast computing machines, control techniques that involve algorithmic selection of actions that minimize some performance objective have gained prominence. Receding horizon predictive control, which is based on such algorithmic selection procedures, has evolved over the years into one of the most useful and applicable control synthesis techniques currently available to a control engineer; see e.g., \cite{ref:MayRaw-09} for a survey of the modern theory and applications in the deterministic setting. Stochastic versions of receding horizon techniques initially evolved within the operations research community, see e.g., \cite{ref:hernandez-lerma1, ref:hernandez-lerma2}, with inventory and manufacturing systems as primary application areas, and have steadily filtered into the domain of control systems, with current applications in financial engineering, process control, industrial electronics, power systems, etc.

		While the deterministic and robust versions of receding horizon control techniques have become standardized and are well-documented, the available literature on the stochastic version still lacks a comprehensive and systematic treatment. Especially prominent in this regard is the matter of stability of control systems under stochastic receding horizon control; indeed, most of the literature does not appear to take advantage of the significantly developed and advanced results on stability of Markov processes. Chief among the reasons for this discrepancy between the deterministic and stochastic settings, perhaps, is the fact that the technical nature of the arguments involved in the stochastic version of stability is significantly heavier than its deterministic counterpart. Indeed, while the bare-essential arguments involved in establishing Lyapunov stability of discrete-time deterministic dynamical systems are only a few and are quite classical, the technical arguments and conditions in the theory of stability of Markov chains is by far larger in number, and constitute an active area of research even today. In addition to that, one has a diverse library of notions of stability that are peculiar to the stochastic setting, and are simply non-existent in the deterministic or the robust setting.

		This article is an attempt at bridging this gap---we connect receding horizon control techniques to some of the principal elements of the theory of stability of Markov processes. Motivated by, and in the spirit of \cite[\S3]{ref:Mayne2000}, first we systematically develop a framework for studying stability of discrete-time controlled stochastic systems under receding horizon policies. We critically examine two approaches in this connection, namely, ensuring stability by appropriate selection of the cost functions, and by adjoining an appropriate constraint to the underlying finite-horizon optimal control problem, before focussing on the former. Against the backdrop of certain standard (and no-so-standard) conditions for stability of Markov processes, we establish conditions on the cost functions such that these stability conditions are satisfied. Thus, this selection procedure, by design, ensures that the closed-loop system under the corresponding receding horizon control policy is stable. We utilize theorems on stability of Markov processes off-the-shelf as to this end. As such, the results pertaining to stability presented here should be regarded as representative guidelines---rather than offer a set of stand-alone results, we provide a general framework for establishing stability results. The details for specific applications must be worked out on a case-by-case basis, as we illustrate through several examples.

		In addition, we develop a framework for analyzing the performance of the closed-loop systems under stochastic receding horizon control policies. Selecting a long-run expected average cost derived from the underlying finite-horizon optimal control problem as our performance index, we provide quantitative bounds on this performance index under receding horizon policies and mild hypotheses. Observe that receding horizon policies are extracted from a finite-horizon optimal control problem, and as such do not naturally offer any clue concerning the long-run expected average costs that they incur. The relationship between stability and performance is also explored here. In particular, we obtain a bound on the aforementioned performance index under a receding horizon policy that also ensures stability in an appropriate sense.

		The layout of this article is as follows: \secref{s:def} provides the description of the control systems. Our results on stability under receding horizon control are contained in \secref{s:stab}, while performance bounds are provided in \secref{s:perf}. Several examples illustrate our results throughout \secref{s:stab} and \secref{s:perf}. The proofs of our results are provided in the Appendices \secref{s:app:stab} and \secref{s:app:perf}. The emphasis here is on conceptual clarity and a systematic presentation sans heuristics. The setting, insofar as the system, the associated receding horizon problem, and the results are concerned, is at an abstract level; this choice is targeted at conveying the key ideas in a transparently clear fashion, without the overload of excessive notation. In particular, the ideas presented here can be readily generalized to the setting of Markov decision processes; we choose to stay with simpler notation and technical requirements here. Numerical tractability of the underlying optimal control problems, which is an integral aspect of receding horizon control techniques, is \emph{not} addressed here.

	\section{System description}
	\label{s:def}

		Consider the discrete-time dynamical system given by the recursion
		\begin{equation}
		\label{e:gensys}
			x_{t+1} = f(x_t, u_t, w_t),\qquad x_0 \text{ given}, \quad t\in\Nz,
		\end{equation}
		where 
		\begin{itemize}[label=\(\circ\), leftmargin=*]
			\item \(x_t\in\R^d\), \(u_t\in\ConSet\subset\R^m\), and \(w_t\in\NoiseSet\subset\R^p\) are the states, the control actions, and the noise at time \(t\);
			\item \(f:\R^d\times \ConSet \times \NoiseSet\lra\R^d\) is a measurable function;\footnote{Henceforth ``measurability'' on Euclidean spaces will refer to ``Borel measurability.''}
			\item \(\ConSet\) is the (nonempty) \emph{control set}, assumed to be measurable and containing the element \(0\in\R^m\);
			\item \(\NoiseSet\subset\R^p\) is assumed to be a measurable set;
			\item \((w_t)_{t\in\Nz}\) is the \emph{process noise}---a \(\NoiseSet\)-valued random process with the \(w_t\)'s independent and identically distributed.\footnote{All random vectors are assumed to be defined on some underlying probability space, for which \(\PP(\cdot)\) is the probability measure, and \(\EE[\cdot]\) is the expectation under \(\PP\). The assumption that \(w_t\)'s are independent and identically distributed can be substantially weakened at the expense of some notational clutter; we choose to stay with the simpler setting here.}
		\end{itemize}

		Let \(k\) be a positive integer. Recall that a \emph{\(k\)-stage (feedback) policy} is a collection \( \pi_{0:k-1} \Let (\pi_0, \pi_1, \ldots, \pi_{k-1}) \) of measurable functions \(\pi_i : \R^d\lra\ConSet\) for each \(i\); we set the \(t\)-th control action as \(u_t = \pi_t(x_t)\). For the synthesis of control actions in a receding horizon fashion, we consider given:
		\begin{itemize}[label=\(\circ\), leftmargin=*]
			\item a \emph{horizon} \(N\in\N\),
			\item a \emph{cost-per-stage function} \(c:\R^d\times \ConSet\lra [0, +\infty[\) and a \emph{final cost function} \(c_F:\R^d\lra[0, +\infty[\), both assumed to be measurable, and
			\item a class \(\Pi\) of feedback policies.
		\end{itemize}
		We introduce the \(N\)-horizon value function
		\begin{equation}
		\label{e:MPCcost}
			V_N(x, \pi) \Let \EE^\pi_x\biggl[\sum_{i=0}^{N-1} c(x_i, u_i) + c_F(x_N)\biggr]
		\end{equation}
		where the policies \(\pi\) belong to the class \(\Pi\).\footnote{If \(\varphi:\R^d\lra\R\) is a measurable function, then \(\EE^\pi_x[\varphi(x_t)]\) stands for the conditional expectation of \(\varphi(x_t)\) given \(x_0 = x\), where \(x_t\) is the state at time \(t\) under the policy \(\pi\); \(\PP_x\) is the corresponding conditional probability.}

		\begin{assumptionnn}
			Without detailing the specifics,
			\vspace*{-1ex}
			\begin{enumerate}[label={\rm (A\arabic*)}, leftmargin=*, align=left, widest=4]
				\item \label{a:regularity of process} we assume sufficient regularity of the process \((w_t)_{t\in\Nz}\) such that the cost \eqref{e:MPCcost} is finite for all \(x\in\R^d\) and all \(\pi\in\Pi\).\AssumptionEnd
			\end{enumerate}
		\end{assumptionnn}

		With these ingredients, the centerpiece of receding horizon control can be stated: It consists of the \(N\)-horizon optimal control problem:
		\begin{equation}
		\label{e:MPC problem}
		\begin{aligned}
			\minimize	& \quad V_N(x, \pi)\\
			\sbjto		& \quad 
				\begin{cases}
					\pi\in\Pi,\\
					\text{dynamics \eqref{e:gensys}}.
				\end{cases}
		\end{aligned}
		\end{equation}

		\begin{assumptionnn}
			In addition to \ref{a:regularity of process}, we assume that
			\vspace*{-1ex}
			\begin{enumerate}[label={\rm (A\arabic*)}, leftmargin=*, align=left, widest=4, start=2]
				\item \label{a:opt existence} the minimization problem \eqref{e:MPC problem} is well-defined for all \(x\in\R^d\), i.e., for each boundary value \(x\in\R^d\), there exists a policy \(\pi\opt\in\Pi\) that solves \eqref{e:MPC problem}.\AssumptionEnd
			\end{enumerate}
		\end{assumptionnn}
		Conditions under which \ref{a:opt existence} holds are of a technical nature, and well documented, e.g., in \cite[Chapter 3]{ref:hernandez-lerma1}.

		We denote the optimal value function \(V_N(x, \pi\opt)\) by \(V_N\opt(x)\) for all \(x\in\R^d\). The system \eqref{e:gensys} under the optimal policy generates the optimal state trajectory \((x_t\opt)_{t=0}^N\) given by
		\begin{equation}
		\label{e:optimal traj}
			x_{t+1}\opt	= f(x_t\opt, \pi_t\opt(x_t\opt), w_t),\qquad x_0\opt \text{ given}, \quad t = 0, 1, \ldots, N-1.
		\end{equation}
		The technique of receding horizon control consists of applying the first element \(\pi_0\opt\) obtained from the minimization problem \eqref{e:MPC problem} recursively, thereby generating the \emph{receding horizon policy}
		\begin{equation}
		\label{e:rh policy}
			\rhpolicy \Let (\pi_0\opt, \pi_0\opt, \ldots).
		\end{equation}
		To wit, given the state \(x_t\) at time \(t\), one solves the minimization problem \eqref{e:MPC problem} with \(x = x_t\), obtains the optimal policy \(\pi\opt\), applies the first element \(\pi_0\opt\) of the policy, moves to time \(t+1\), and repeats the preceding steps. The system \eqref{e:gensys} under the policy \(\rhpolicy\) generates the state trajectory \((x_t)_{t\in\Nz}\) via the recursion
		\begin{equation}
		\label{e:rh closed loop}
			x_{t+1}	= f(x_t, \pi_0\opt(x_t), w_t),\qquad x_0 \text{ given},\quad t\in\Nz.
		\end{equation}
		Observe that the process \((x_t)_{t\in\Nz}\) generated by \eqref{e:rh closed loop} is \emph{Markovian}, i.e., the probability distribution of the future state \(x_{t+1}\) at time \(t+1\) is conditionally independent of the past \((x_s)_{s=0}^{t-1}\) given the present state \(x_t\). Indeed, for \(S\) a Borel subset of \(\R^d\), we have
		\begin{equation}
		\label{e:Markovian proof}
		\begin{aligned}
			\PP\bigl(x_{t+1} \in S\,\big|\, (x_s)_{s=0}^{t}\bigr) & = \PP\bigl( f(x_t, \pi_0\opt(x_t), w_t) \in S\,\big|\, (x_s)_{s=0}^t\bigr)\\
				& = \PP\bigl( f(x_t, \pi_0\opt(x_t), w_t)\in S\,\big|\, x_t\bigr).
		\end{aligned}
		\end{equation}
		The following sections will study aspects of both qualitative and quantitative behavior of the process \((x_t)_{t\in\Nz}\) generated by \eqref{e:rh closed loop}.

	\section{Stability under receding horizon control}
	\label{s:stab}

		Stability of the \emph{controlled process} \((x_t)_{t\in\Nz}\) generated by the recursion \eqref{e:rh closed loop} is a desirable property in practice. There are two techniques in which stability can be ensured:
		\begin{enumerate}[label=(S\arabic*), align=left, widest=2, leftmargin=*]
			\item \label{stab:cost} \textsf{By appropriate choice of cost functions:} Stability of the controlled process \((x_t)_{t\in\Nz}\) can be ensured by an appropriate selection of the cost-per-stage function \(c\) and the cost function \(c_F\). In the deterministic setting, conditions for asymptotic stability in terms of the cost functions are standard, see e.g., \cite[\S3]{ref:Mayne2000} and \cite{ref:KeeGil-88} for details and further references. In the stochastic setting 
			the aim is to arrive at a Lyapunov-like inequality in terms of the cost functions, which in turns ensures stability of the closed-loop system. While conceptually this technique leads to an elegant analysis, there are two points worthy of note:
				\begin{itemize}[label=\(\circ\), leftmargin=*]
					\item For a control engineer, the selection of the cost functions \(c\) and \(c_F\) is typically dictated by the physics of the problem. In case the stability conditions are not satisfied by the natural candidates \(c\) and \(c_F\), the engineer may be forced to select cost functions that may have little to do with the particular physical aspects of the plant.
					\item The applicability of predictive control is contingent upon numerical tractability of the finite horizon optimal control problem \eqref{e:MPC problem}. The extent of flexibility in the choice of the functions \(c\) and \(c_F\) is determined, therefore, by cases where numerically tractable problems can be derived from \eqref{e:MPC problem}. In other words, applicability of this technique is limited by numerical tractability of the problem \eqref{e:MPC problem}. However, as we shall illustrate through examples, more than one cost functions that ensure stability; there is, therefore, some freedom which the control engineer can utilize to suit numerical tractability.
				\end{itemize}
				In the deterministic setting, standard stability conditions under receding horizon control require the existence of a stabilizing feedback controller \emph{inside} a certain terminal set, such that the terminal set becomes invariant for the controlled system \cite[\S3]{ref:Mayne2000}. In the stochastic setting, in general, even the weakest forms of stability (for instance, positive recurrence, existence of invariant measures, etc,) require the existence of a certain ``drift condition'' \emph{outside} a bounded set \cite{ref:MeyTwe-09}. This difference between the two settings is generally \emph{unavoidable} for the lack of a notion analogous to invariance in the deterministic setting; see also Remark \ref{r:drift condition} for a more specific discussion. For instance, in systems where there is non-zero probability of jumps in the state infinitely often and the magnitude of the jumps is not bounded, the notion of deterministic invariance does not make sense for any bounded subset of the state-space. Of course, this assertion does not apply to systems subjected to bounded noise where it may be possible to perform a robust analysis, but it does indeed apply to the standard benchmark case of a linear control system with additive and independent Gaussian noise.

			\item \label{stab:constraint} \textsf{By adjoining an appropriate constraint to the optimal control problem \eqref{e:MPC problem}:} This technique was first adopted in \cite{ref:BerBem-09, ref:HokChaRamChaLyg-10, ref:HokCinChaRamLyg-12} in the context of receding horizon control of linear stochastic controlled systems. It consists of adjoining a constraint to the optimal control problem \eqref{e:MPC problem}, so that the modified optimal control problem stays feasible for all \(x\in\R^d\), and the resulting receding horizon policy \(\rhpolicy\) defined in \eqref{e:rh policy} ensures stability. Observe that (i) the problem \eqref{e:MPC problem} where one intends to adjoin the constraint is limited to a finite-horizon, while (ii) the target of the constraint---attaining stability of the closed-loop system---involves a necessarily infinite-horizon notion. Two points to note:
				\begin{itemize}[label=\(\circ\), leftmargin=*]
					\item It is imperative to ensure that the problem \eqref{e:MPC problem} with the new constraint is feasible for all boundary values \(x\); this necessarily imposes restrictions on the type of admissible constraints.
					\item Adjoining a constraint to the problem \eqref{e:MPC problem} potentially shifts both the optimal value and the optimizer \(\pi\opt\) corresponding to the original problem \eqref{e:MPC problem}. Therefore, a trade-off between the performance and a certain desirable qualitative behavior of the closed-loop system may have to be accepted.
				\end{itemize}
				As will be evident from the above discussion, a systematic development of the case \ref{stab:constraint} is largely impossible due to the absence of a set of unifying objects inherent to the optimal control problem \eqref{e:MPC problem}. Since the constraints do not, generally, depend on the cost functions, the details of the technique may differ significantly between specific applications
		\end{enumerate}

		In this article we focus on \ref{stab:cost}; the relevant results are presented in \secref{s:stab:cost}. Preparatory to that, in \secref{s:stab:review} we briefly recall certain basic aspects of the general theory of stability of discrete-time Markov processes.

		\subsection{Review of the general theory of stability of discrete-time Markov processes}
		\label{s:stab:review}
		The type of stability that we shall focus on here concerns boundedness of sequences of the form \(\bigl(\EE_x[h(x_t)]\bigr)_{t\in\Nz}\), for appropriate functions \(h:\R^d\lra[0, +\infty[\). For instance, consider \(h(z) = \norm{z}^p\) for \(p > 1\). In view of the fact that\footnote{This identity is an immediate consequence of Fubini's theorem.
			}
		\[
			\EE_x\bigl[\norm{x_t}^p\bigr] = p \int_0^{+\infty} r^{p-1} \PP_x(\norm{x_t} > r)\,\drv r,
		\]
		boundedness of \(\bigl( \EE_x[\norm{x_t}^p]\bigr)_{t\in\Nz} \) implies that the conditional probability, given the initial condition \(x_0 = x\), of the states being at a distance \(r\) from the origin decays faster than \(r^{-p}\) as \(r\) grows large, uniformly over time \(t\). In other words, we have an assertion corresponding to the behavior of the tail of the conditional probability distributions \(\PP_x(\norm{x_t} > r), t\in\Nz\), uniformly over time \(t\). The case of \(p = 2\) is especially prevelant in the literature, and goes under the name of mean-square boundedness. To understand the qualitative behavior of \(\R^d\)-valued Markov processes \((x_t)_{t\in\Nz}\), the general strategy consists of studying the behavior of sequences such as \(\bigl(\EE_x[h(x_t)]\bigr)_{t\in\Nz}\) for ``norm-like'' functions \(h\), and drawing appropriate inferences concerning the former.

		Recall that the process \((x_t)_{t\in\Nz}\) generated by \eqref{e:rh closed loop} is Markovian in view of \eqref{e:Markovian proof}. For discrete-time Markov processes, the theory of stability is extremely well-developed, (see e.g., \cite{ref:MeyTwe-09} for a book-length treatment,) and most of the standard conditions for stability involve what is known as a ``negative drift condition.''\footnote{General results dealing with stability but not relying on negative drift conditions are rare; one example may be found in \cite{ref:ChaPal-10}.} A generic negative drift condition takes the following form:
		\begin{enumerate}[label=(D), leftmargin=*]
			\item \label{cond:drift} there exist measurable functions \(\Xi:\R^d\lra[0, +\infty[\) and \(\Upsilon:\R^d\lra[0, +\infty[\), and a bounded and measurable set \(K\subset\R^d\), such that 
			\[
				\EE_x[\Xi(x_1)] - \Xi(x) \le -\Upsilon(x)\quad\text{for all \(x\not\in K\)}.
			\]
		\end{enumerate}
		Depending on the properties of the functions \(\Xi\) and \(\Upsilon\), it may be possible to assert the type stability of the process \((x_t)_{t\in\Nz}\). Observe that the condition \ref{cond:drift} closely resembles Lyapunov stability conditions for deterministic discrete-time systems.

		Perhaps the most well-known drift condition is contained in the following:
		\begin{proposition}
		\label{p:FL classical}
			Let \((x_t)_{t\in\Nz}\) be a Markov process. Suppose that there exist \(\beta > 0\) and \(\lambda_\circ \in [0, 1[\), a measurable function \(V:\R^d\lra[0, +\infty[\), and a compact set \(K\subset\R^d\) such that \(\EE_x[V(x_1)] \le \lambda_\circ V(x)\) for all \(x\not\in K\), and \(\sup_{x\in K} \EE_x[V(x_1)] = \beta\). Then \(\EE_x[V(x_t)] \le \lambda_\circ^t V(x) + \beta(1-\lambda_\circ)\inverse\) for all \(x\in\R^d\) and \(t\in\Nz\).
		\end{proposition}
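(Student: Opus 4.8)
The plan is to establish the bound by induction on $t$, using the case split between $K$ and its complement that is already built into the hypotheses. First I would set $\lambda \Let \lambda_\circ$ for brevity and observe that the hypotheses give, for \emph{every} $x\in\R^d$, the single inequality $\EE_x[V(x_1)] \le \lambda V(x) + \beta$; indeed, if $x\not\in K$ this follows from $\EE_x[V(x_1)] \le \lambda V(x) \le \lambda V(x) + \beta$ since $\beta > 0$, while if $x\in K$ it follows from $\EE_x[V(x_1)] \le \beta \le \lambda V(x) + \beta$ since $\lambda V(x) \ge 0$. This collapses the two-case drift condition into one clean recursive inequality valid everywhere, which is the key simplification.

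Next I would run the induction. The base case $t = 0$ is $\EE_x[V(x_0)] = V(x) \le V(x) + \beta(1-\lambda)\inverse$, which holds trivially. For the inductive step, suppose $\EE_y[V(x_t)] \le \lambda^t V(y) + \beta(1-\lambda)\inverse$ for all $y$. Using the Markov property (the tower property of conditional expectation together with time-homogeneity, i.e.\ $\EE_x[\,\cdot \mid x_1] = \EE_{x_1}[\,\cdot\,]$ applied to the shifted process),
\begin{align*}
	\EE_x[V(x_{t+1})]
		&= \EE_x\bigl[\EE_x[V(x_{t+1}) \mid x_1]\bigr]
		= \EE_x\bigl[\EE_{x_1}[V(x_t)]\bigr]\\
		&\le \EE_x\bigl[\lambda^t V(x_1) + \beta(1-\lambda)\inverse\bigr]
		= \lambda^t \EE_x[V(x_1)] + \beta(1-\lambda)\inverse\\
		&\le \lambda^t\bigl(\lambda V(x) + \beta\bigr) + \beta(1-\lambda)\inverse
		= \lambda^{t+1} V(x) + \beta\lambda^t + \beta(1-\lambda)\inverse.
\end{align*}
Since $\lambda^t \le 1$ and $\beta > 0$, we have $\beta\lambda^t + \beta(1-\lambda)\inverse \le \beta + \beta(1-\lambda)\inverse$; but more sharply, $\beta\lambda^t \le \beta \le \beta(1-\lambda)\inverse \cdot (1-\lambda) \le \beta(1-\lambda)\inverse$, and in fact one wants the cleaner estimate $\sum_{j=0}^{t}\beta\lambda^j \le \beta(1-\lambda)\inverse$. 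I would actually structure the induction to carry the partial geometric sum: the honest inductive hypothesis to propagate is $\EE_x[V(x_t)] \le \lambda^t V(x) + \beta\sum_{j=0}^{t-1}\lambda^j$, and then at the end bound $\sum_{j=0}^{t-1}\lambda^j \le \sum_{j=0}^{\infty}\lambda^j = (1-\lambda)\inverse$ to recover the stated form.

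The only genuinely delicate point is the justification of the interchange $\EE_x\bigl[\EE_x[V(x_{t+1}) \mid x_1]\bigr] = \EE_x\bigl[\EE_{x_1}[V(x_t)]\bigr]$, which rests on the (homogeneous) Markov property and on $V \ge 0$ so that all the conditional expectations are well-defined in $[0,+\infty]$ and Tonelli's theorem permits the manipulations without integrability concerns; the monotonicity of conditional expectation then propagates the pointwise inequality $\EE_{x_1}[V(x_t)] \le \lambda^t V(x_1) + \beta\sum_{j=0}^{t-1}\lambda^j$ (which is the inductive hypothesis evaluated at the random point $x_1$) under the outer expectation. Everything else is elementary arithmetic with the geometric series. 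I do not expect any real obstacle here; the proposition is a standard Foster--Lyapunov-type iteration and the proof is short.
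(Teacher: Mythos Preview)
Your proof is correct and follows essentially the same approach as the paper's: both iterate the one-step drift inequality via the Markov property and bound the resulting geometric sum. The only cosmetic difference is that you collapse the two-case hypothesis into the single global inequality \(\EE_x[V(x_1)] \le \lambda_\circ V(x) + \beta\) before iterating, whereas the paper keeps the indicator split through the iteration (yielding the intermediate term \(\beta\sum_k \lambda_\circ^{t-1-k}\PP_x(x_k\in K)\)) and only at the end bounds the probabilities by \(1\); the net effect is identical.
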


		The hypotheses of Proposition \ref{p:FL classical} imply
		\begin{equation}
		\label{e:geometric drift}
			\EE_x[V(x_1)] - V(x) \le -(1-\lambda_\circ) V(x) \quad \text{for all \(x\not\in K\)},
		\end{equation}
		which is sometimes known as a ``geometric drift condition.'' The condition \eqref{e:geometric drift} is strong---the expected value of the function \(V\) decreases by a fraction \(\lambda_\circ\) in one step \emph{for all} boundary conditions \(x\) outside a compact set. See e.g., \cite{ref:MeyTwe-09} for further details, discussions, and applications of Proposition \ref{p:FL classical}.

		Among the weakest drift conditions, we have the following:
		\begin{proposition}
		\label{p:PR conditions}
			Let \((x_t)_{t\in\Nz}\) be a Markov process. Suppose that there exist \(\beta, M, \eps > 0\), a measurable function \(V:\R^d\lra[0, +\infty[\), and a compact set \(K\subset\R^d\) such that
			\begin{gather}
				\label{e:PR:drift} \EE_x[V(x_1)] - V(x) \le -\beta \quad\text{for all \(x\not\in K\)},\quad\text{and}\\
				\label{e:PR:jump} \EE\bigl[\abs{V(x_{t+1}) - V(x_t)}^{2+\eps}\,\big|\, (V(x_s))_{s=0}^{t}\bigr] \le M\quad \text{for all \(t\in\Nz\)}.
			\end{gather}
			Then for each \(x\in\R^d\) the sequence \(\bigl(\EE_x[V(x_t)]\bigr)_{t\in\Nz}\) is bounded.
		\end{proposition}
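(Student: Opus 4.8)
The plan is to reduce the claim to a general theorem of Pemantle and Rosenthal on nonnegative sequences with a negative drift, the reduction being a short computation with the Markov property, and then to indicate the mechanism behind that theorem, since that is where the content lies. Specifically, fix $x\in\R^d$ and write $Z_t\Let V(x_t)$ for the process generated by \eqref{e:rh closed loop} from $x$, with $(\mathcal F_t)_{t\in\Nz}$ its natural filtration; then $\EE[Z_t]=\EE_x[V(x_t)]$. Since $(x_t)_{t\in\Nz}$ is Markov (see \eqref{e:Markovian proof}), the tower property turns \eqref{e:PR:drift} into
\[
	\EE[Z_{t+1}\mid\mathcal F_t]-Z_t\le-\beta\quad\text{on }\{x_t\notin K\},
\]
while, because $\sigma\bigl((V(x_s))_{s=0}^t\bigr)\subseteq\mathcal F_t$, a further application of the tower property turns \eqref{e:PR:jump} into the uniform bound $\EE\bigl[\abs{Z_{t+1}-Z_t}^{2+\eps}\mid\mathcal F_t\bigr]\le M$ for all $t\in\Nz$. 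Setting $\theta\Let\sup_{y\in K}V(y)$, which is finite because $V$ is bounded on the compact set $K$, the first inequality can be rephrased as $\EE[Z_{t+1}\mid\mathcal F_t]-Z_t\le-\beta$ whenever $Z_t\ge\theta$. Hence $(Z_t)_{t\in\Nz}$ is a nonnegative, adapted sequence with a deterministic initial value, uniformly bounded conditional $(2+\eps)$-th increment moments, and negative drift above a fixed level, and boundedness of $\bigl(\EE[Z_t]\bigr)_{t\in\Nz}$ is exactly the conclusion of the Pemantle--Rosenthal theorem.

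For the mechanism behind that theorem I would use an excursion decomposition relative to the level $\theta$. On each excursion of $(Z_t)$ above $\theta$ the drift inequality makes $\bigl(Z_{t\mn\tau}+\beta(t\mn\tau)\bigr)_{t\in\Nz}$ a supermartingale up to the first return time $\tau$ to $\{Z\le\theta\}$; optional stopping then bounds $\EE[\tau]$ by a multiple of $\theta$ (which keeps $Z_\tau$ integrable) and shows that the excursion lasts at least $\ell$ steps with probability $O(1/\ell)$. For a fixed horizon $n$ one decomposes $Z_n=Z_\sigma+\sum_{j=\sigma}^{n-1}(Z_{j+1}-Z_j)$, where $\sigma\le n$ is the last instant before $n$ with $Z_\sigma\le\theta$ --- on the event that no such instant exists, $(Z_t)$ is a supermartingale up to $n$, which forces $n$ to be at most a fixed multiple of $Z_0$ and so contributes for only finitely many $n$ --- and one estimates $\EE[Z_n]$ by summing over the duration $n-\sigma$ of this terminal excursion.

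The step I expect to be the main obstacle, and the one that forces the hypothesis of a $(2+\eps)$-th rather than merely a second increment moment, is bounding the contribution of tall terminal excursions: one must control the overshoot when $(Z_t)$ crosses $\theta$ upward and, more importantly, the probability that the martingale part of an excursion reaches a large height $h$. A maximal inequality for a suitable convex transform of that martingale part, combined with the $O(1/\ell)$ bound on the excursion length, yields a tail in $h$ that decays fast enough to make $\sum_{\ell\ge1}\EE\bigl[Z_n\,\indic{\{n-\sigma=\ell\}}\bigr]$ finite uniformly in $n$; the extra $\eps$ is precisely what makes this last sum converge. Everything else is routine bookkeeping with conditional expectations and optional stopping.
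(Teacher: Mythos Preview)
Your reduction is exactly the paper's proof: there the argument is the single line ``This is a special case of \cite[Theorem~1]{ref:PemRos-99},'' so identifying \(Z_t\Let V(x_t)\) and invoking Pemantle--Rosenthal is precisely what is intended. Your sketch of the excursion mechanism behind that theorem goes considerably beyond what the paper supplies.

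Two technical points in the reduction deserve attention. First, the tower property is invoked in the wrong direction for the jump bound: from \(\EE\bigl[\,\cdot\mid\mathcal G_t\bigr]\le M\) with \(\mathcal G_t\Let\sigma\bigl((V(x_s))_{s\le t}\bigr)\subseteq\mathcal F_t\) one cannot conclude \(\EE\bigl[\,\cdot\mid\mathcal F_t\bigr]\le M\). The remedy is to work with \(\mathcal G_t\) throughout: \((Z_t)\) is \(\mathcal G_t\)-adapted, hypothesis \eqref{e:PR:jump} is already stated relative to \(\mathcal G_t\), and the drift bound passes down to \(\mathcal G_t\) via the tower property in the correct direction, since \(\{Z_t>\theta\}\in\mathcal G_t\) and \(\EE[Z_{t+1}\mid\mathcal G_t]=\EE\bigl[\EE[Z_{t+1}\mid\mathcal F_t]\,\big|\,\mathcal G_t\bigr]\). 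Second, and more substantively, \(V\) is only assumed measurable, so \(\theta=\sup_{y\in K}V(y)\) need not be finite; compactness of \(K\) alone does not bound a merely measurable function. Without this you cannot convert the set condition \(\{x_t\notin K\}\) into the level condition \(\{Z_t>\theta\}\) that Pemantle--Rosenthal's hypotheses require. Strictly speaking the same lacuna sits behind the paper's one-line citation; in practice one either adds local boundedness of \(V\) as a standing assumption or observes that in every application here the Lyapunov function is continuous, which closes the gap.
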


		Proposition \ref{p:PR conditions} stipulates a \emph{constant} negative drift \eqref{e:PR:drift} outside a compact set, as opposed to a \emph{geometric} negative drift in \eqref{e:geometric drift}. The condition \eqref{e:PR:drift} is rather weak, and the price for weakening the drift condition is the introduction of a uniform bound \eqref{e:PR:jump} on the jumps of the process \((x_t)_{t\in\Nz}\). In general, both the conditions \eqref{e:PR:drift} and \eqref{e:PR:jump} are necessary, and the \((2+\eps)\) exponent in \eqref{e:PR:jump} is tight; see \cite{ref:PemRos-99} for details and (counter-)examples. An application of Proposition \ref{p:PR conditions} to control of linear systems may be found in \cite{ref:RamChaMilHokLyg-10}, and to receding horizon control in \cite{ref:HokChaRamChaLyg-10}.

		Propositions \ref{p:FL classical} and \ref{p:PR conditions} may be viewed as representing two extremes of the spectrum of stability results involving negative drift conditions. We refer the reader to \cite{ref:MeyTwe-09} for other drift conditions and their corresponding assertions concerning stability. Proposition \ref{p:PR conditions} also highlights some of the features peculiar to stochastic control---indeed, while in the deterministic setting, the drift (in terms of Lyapunov functions) needs to be merely negative definite to ensure global asymptotic convergence of the system, the stability assertions in the stochastic setting depend crucially on the functional nature of the drift in addition to other conditions.

		\subsection{Stability under appropriate selection of cost functions}
		\label{s:stab:cost}
		\begin{assumptionnn}
			In addition to \ref{a:opt existence}, we stipulate that
			\vspace*{-1ex}
			\begin{enumerate}[label={\rm (A\arabic*)}, leftmargin=*, align=left, widest=4, start=3]
				\item \label{a:feedback} there exist a measurable feedback control function \(g:\R^d\lra\ConSet\), a number \(b \ge 0\), and a measurable and bounded set \(K\subset\R^d\) such that
					\begin{enumerate}[label=(\ref{a:feedback}-\roman*), leftmargin=*, align=left, widest=2]
						\item \label{e:boundedness condition} \(\displaystyle{\sup_{z\in K}\Bigl\{ c(z, g(z)) - c_F(z) + \EE\bigl[ c_F\circ f(z, g(z), w_0) \bigr]\Bigr\} \le b}\),
						\item \label{e:stabilizing condition} \(c(z, g(z)) + \EE\bigl[c_F\circ f(z, g(z), w_0)\bigr] \le c_F(z)\) for all \(z\not\in K\).\AssumptionEnd
					\end{enumerate}
			\end{enumerate}
		\end{assumptionnn}
		Observe that \ref{e:stabilizing condition} is a negative drift condition in disguise: indeed, it is precisely
		\[
			\EE_x[c_F(x_1^g)] - c_F(x) \le -c(x, g(x))\quad \text{for all \(x\not\in K\)},
		\]
		where \(x_1^g \Let f(x, g(x), w_0)\); here the cost functions \(c_F\) and \(c\) play the roles of the functions \(\Xi\) and \(\Upsilon\) in \ref{cond:drift}, respectively. (``Global'' conditions, similar in spirit to \ref{e:stabilizing condition}, in the context of stochastic receding horizon control have been proposed recently in \cite{ref:QueNes-12}.) However, since the stabilizing feedback \(g\) is not necessarily identical to \(\pi_0\opt\), the condition \ref{e:stabilizing condition} does not guarantee stability under the receding horizon control policy \(\rhpolicy\). Nevertheless, from the boundedness condition \ref{e:boundedness condition} and the drift condition \ref{e:stabilizing condition}, both expressed in terms of the cost functions \(c\) and \(c_F\), we can establish the following drift condition involving the optimal value function \(V_N\opt\) corresponding to \eqref{e:MPC problem}, and under the receding horizon policy \(\rhpolicy\):
		\begin{theorem}
		\label{t:drift}
			Consider the controlled system \eqref{e:gensys} with its accompanying data, and the optimal control problem \eqref{e:MPC problem}. Suppose that Assumption \ref{a:feedback} holds. Then
			\begin{equation}
			\label{e:drift condition}
				\text{for any \(x\in\R^d\),}\quad \EE_x\bigl[V_N\opt(x_1\opt)\bigr] - V_N\opt(x) \le -c(x, \pi_0\opt(x)) + b,
			\end{equation}
			where \((x_t\opt)_{t=0}^{N-1}\) is the sequence generated by the recursion \eqref{e:optimal traj}. In particular, under the receding horizon policy \(\rhpolicy\) derived from \eqref{e:MPC problem}, the closed-loop process \((x_t)_{t\in\Nz}\) generated by \eqref{e:rh closed loop} satisfies
			\begin{equation}
			\label{e:drift condition rh}
				\text{for any \(x\in\R^d\),} \quad \EE^{\rhpolicy}_x\bigl[V_N\opt(x_1)\bigr] - V_N\opt(x) \le -c(x, \pi_0\opt(x)) + b.
			\end{equation}
		\end{theorem}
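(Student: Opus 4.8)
The plan is to run the standard ``shift-and-append'' construction of predictive control, upgraded so that it yields a stochastic drift rather than a deterministic Lyapunov decrease. Fix \(x\in\R^d\) and let \(\pi\opt=(\pi_0\opt,\dots,\pi_{N-1}\opt)\) be an optimal \(N\)-stage policy for \eqref{e:MPC problem} with boundary value \(x\) (it exists by \ref{a:opt existence}), and let \((x_t\opt)_{t=0}^N\) be the associated trajectory \eqref{e:optimal traj} with \(x_0\opt=x\). First I would form the \(N\)-stage feedback policy \(\wt\pi\Let(\pi_1\opt,\pi_2\opt,\dots,\pi_{N-1}\opt,g)\) by discarding the first component of \(\pi\opt\) and appending the stabilizing feedback \(g\) from \ref{a:feedback}. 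This requires---and I would record---the mild structural hypothesis that \(\Pi\) is closed under such truncation-and-extension, so that \(\wt\pi\in\Pi\). Since \(V_N\opt(y)=\min_{\pi\in\Pi}V_N(y,\pi)\le V_N(y,\wt\pi)\) for every \(y\in\R^d\), evaluating at \(y=x_1\opt\) and taking \(\EE_x\) reduces the whole estimate to bounding \(\EE_x[V_N(x_1\opt,\wt\pi)]\).

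The main step is to re-express \(\EE_x[V_N(x_1\opt,\wt\pi)]\) along the optimal trajectory. Because \((w_t)_{t\in\Nz}\) is i.i.d., the \(\wt\pi\)-trajectory issued from a point \(y\), and the shifted tail \((x_1\opt,x_2\opt,\dots,x_N\opt)\) followed by one more step under \(g\), are---conditionally on \(x_1\opt=y\)---Markov chains driven by identical one-step kernels and the same noise law; making this precise by conditioning successively on \(w_0,w_1,\dots\) and invoking independence yields, with \(x_{N+1}^g\Let f(x_N\opt,g(x_N\opt),w_N)\),
\[
    \EE_x\bigl[V_N(x_1\opt,\wt\pi)\bigr] = \EE_x\biggl[\sum_{j=1}^{N-1} c(x_j\opt,\pi_j\opt(x_j\opt)) + c(x_N\opt,g(x_N\opt)) + c_F(x_{N+1}^g)\biggr].
\]
Subtracting the expansion \(V_N\opt(x)=\EE_x\bigl[c(x,\pi_0\opt(x)) + \sum_{j=1}^{N-1} c(x_j\opt,\pi_j\opt(x_j\opt)) + c_F(x_N\opt)\bigr]\), the sum over \(j=1,\dots,N-1\) cancels and (using that \(c(x,\pi_0\opt(x))\) is deterministic) I am left with
\[
    \EE_x\bigl[V_N\opt(x_1\opt)\bigr]-V_N\opt(x) \le -c(x,\pi_0\opt(x)) + \EE_x\bigl[c(x_N\opt,g(x_N\opt)) + c_F(x_{N+1}^g) - c_F(x_N\opt)\bigr].
\]

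To close the argument I would bound the trailing expectation by \(b\). Since \(x_N\opt\) is a function of \(w_0,\dots,w_{N-1}\) it is independent of \(w_N\), so conditioning on \(x_N\opt\) (and using that \(w_N\) has the same law as \(w_0\)) rewrites it as \(\EE_x[\Delta(x_N\opt)]\) with \(\Delta(z)\Let c(z,g(z)) + \EE[c_F\circ f(z,g(z),w_0)] - c_F(z)\). Then \ref{e:boundedness condition} gives \(\Delta(z)\le b\) for \(z\in K\), while \ref{e:stabilizing condition} gives \(\Delta(z)\le 0\le b\) for \(z\notin K\); hence \(\Delta\le b\) on all of \(\R^d\) and \(\EE_x[\Delta(x_N\opt)]\le b\), which is \eqref{e:drift condition}. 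For the ``in particular'' claim I would only note that under \(\rhpolicy\) the first transition is \(x_1=f(x,\pi_0\opt(x),w_0)\), which is exactly \(x_1\opt\) of \eqref{e:optimal traj} started at \(x_0\opt=x\), so \(\EE^{\rhpolicy}_x[V_N\opt(x_1)]=\EE_x[V_N\opt(x_1\opt)]\) and \eqref{e:drift condition rh} follows from \eqref{e:drift condition}. The step I expect to be the real obstacle is the law-identification in the middle: rigorously converting ``run \(\wt\pi\) from the random state \(x_1\opt\)'' into an expectation along the optimal trajectory is precisely where the i.i.d.\ hypothesis on \((w_t)\) (together with a measurable-selection/Markov-kernel bookkeeping) is genuinely needed; the remainder is the elementary telescoping and the two-case split from Assumption \ref{a:feedback}.
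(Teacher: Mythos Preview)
Your proposal is correct and follows essentially the same route as the paper: construct the shifted policy \(\wt\pi=(\pi_1\opt,\dots,\pi_{N-1}\opt,g)\), use optimality to bound \(\EE_x[V_N\opt(x_1\opt)]\) by \(\EE_x[V_N(x_1\opt,\wt\pi)]\), expand along the optimal trajectory so that the intermediate stage costs cancel, and then bound the residual term via Assumption~\ref{a:feedback}. The only cosmetic difference is that the paper conditions on \(\{x_\ell\opt\}_{\ell=1}^N\) and obtains the slightly sharper intermediate bound \(b\,\PP_x(x_N\opt\in K)\) before passing to \(b\), whereas you argue \(\Delta(z)\le b\) pointwise via the two-case split; both arrive at the same conclusion.
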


		Theorem \ref{t:drift} is the first of our two main results, and it is a representative statement aimed at establishing a connection between receding horizon control and stability of Markov processes. Observe that even though \eqref{e:drift condition rh} does not resemble a negative drift condition per se, a condition analogous to \ref{cond:drift} can be extracted from \eqref{e:drift condition rh} under appropriate assumptions on the function \(c\).\footnote{Of course, \eqref{e:drift condition rh} may not lead to a negative drift outside a bounded set, e.g., if the function \(c\) is bounded above by \(b\), and even if a negative drift condition can be extracted from \eqref{e:drift condition rh}, it may not be possible to assert boundedness of the sequence \(\bigl(\EE^{\rhpolicy}_x\bigl[V_N\opt(x_t)\bigr]\bigr)_{t\in\Nz}\), e.g., if the conditions \eqref{e:PR:drift} and \eqref{e:PR:jump} do not hold simultaneously.} Once such a procedure has been carried out, one can apply appropriate results on stability of Markov processes, e.g., Propositions \ref{p:FL classical} or \ref{p:PR conditions}, to assert boundedness of the sequence \(\bigl(\EE^{\rhpolicy}_x\bigl[V_N\opt(x_t)\bigr]\bigr)_{t\in\Nz}\). We shall illustrate applications of Theorem \ref{t:drift} through our results and examples in the sequel.

		In general, analytical expressions of the optimal value functions \(V_N\opt\) are difficult to obtain; however, \(V_N\opt\) can be bounded above and below in terms of the final cost \(c_F\) and the cost-per-stage function \(c\), respectively, as follows:
		\begin{proposition}
		\label{p:V vs final-cost}
			Consider the controlled system \eqref{e:gensys} with its accompanying data, and the optimal control problem \eqref{e:MPC problem}. Suppose that Assumption \ref{a:feedback} holds. Then
			\begin{equation}
			\label{e:V vs final cost}
				\text{for all \(x\in\R^d\),} \quad 
				\begin{aligned}
					& c(x, \pi_0\opt(x)) + (N-1) \inf_{(z, u)\in\R^d\times\ConSet} c(z, u) \\
					& \quad + \inf_{(z, u)\in\R^d\times\ConSet} \EE\bigl[c_F(f(z, u, w_0))\bigr] \le V_N\opt(x) \le c_F(x) + N b.
				\end{aligned}
			\end{equation}
		\end{proposition}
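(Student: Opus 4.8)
The plan is to prove the two inequalities separately, each essentially a direct consequence of the definition \eqref{e:MPCcost} of the value function together with Assumption \ref{a:feedback}. For the \emph{upper} bound, I would exploit that the stabilizing feedback \(g\) is an admissible policy candidate in the sense relevant to \(V_N\opt\): since \(V_N\opt(x)\) is the minimum over \(\pi\in\Pi\), it is bounded above by the cost incurred along the trajectory generated by applying \(g\) at every stage — provided the policy \((g,g,\ldots,g)\) lies in \(\Pi\), or at least can be used to majorize the optimal cost (this point may need a word, and I would phrase the argument so that it uses whichever of these is available; this is the one place where a hidden hypothesis is being leaned on). Writing \(x_i^g\) for the state at time \(i\) under \(g\) started at \(x\), I would then bound \(\EE_x^g\bigl[\sum_{i=0}^{N-1} c(x_i^g, g(x_i^g)) + c_F(x_N^g)\bigr]\) by telescoping: from \ref{e:stabilizing condition}, whenever \(x_i^g\notin K\) we have \(c(x_i^g,g(x_i^g)) + \EE[c_F(x_{i+1}^g)\mid x_i^g] \le c_F(x_i^g)\), i.e. the per-stage cost is absorbed into the decrease of \(c_F\); whenever \(x_i^g\in K\), the same quantity is at most \(c_F(x_i^g) + b\) by \ref{e:boundedness condition}. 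Summing the conditional inequalities over \(i=0,\dots,N-1\) and taking total expectations, the \(c_F\) terms telescope and one is left with \(V_N\opt(x) \le c_F(x) + Nb\) (using that each of the \(N\) stages contributes at most an extra \(b\), and that \(c_F \ge 0\) lets us drop the terminal \(\EE[c_F(x_N^g)]\) from the wrong side — actually it appears with the right sign, so it simply cancels). I would present this telescoping carefully in a single \verb|align*| without blank lines.

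For the \emph{lower} bound, I would argue directly from \eqref{e:MPCcost}: for \emph{any} admissible \(\pi\in\Pi\) with \(\pi_0 = \pi_0\opt\) — in particular for the optimizer itself — we have
\[
V_N\opt(x) = \EE_x^{\pi\opt}\Bigl[\sum_{i=0}^{N-1} c(x_i\opt, u_i\opt) + c_F(x_N\opt)\Bigr]
= c(x,\pi_0\opt(x)) + \EE_x^{\pi\opt}\Bigl[\sum_{i=1}^{N-1} c(x_i\opt,u_i\opt) + c_F(x_N\opt)\Bigr].
\]
Now I would bound each of the \(N-1\) summands \(c(x_i\opt,u_i\opt)\) for \(i=1,\dots,N-1\) below by the global infimum \(\inf_{(z,u)} c(z,u)\), and bound \(\EE[c_F(x_N\opt)]\) below by \(\inf_{(z,u)}\EE[c_F(f(z,u,w_0))]\) — the latter is legitimate because \(x_N\opt = f(x_{N-1}\opt, u_{N-1}\opt, w_{N-1})\) and \(w_{N-1}\) is independent of \(x_{N-1}\opt, u_{N-1}\opt\), so the conditional expectation of \(c_F(x_N\opt)\) given \((x_{N-1}\opt, u_{N-1}\opt)\) equals \(\EE[c_F(f(z,u,w_0))]\) evaluated at \((z,u) = (x_{N-1}\opt, u_{N-1}\opt)\), which is \(\ge\) the stated infimum pointwise; take expectations. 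Assembling these pointwise lower bounds inside the expectation and pulling the now-deterministic quantities out gives exactly the left-hand side of \eqref{e:V vs final cost}.

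The only genuine obstacle I anticipate is the bookkeeping in the upper-bound telescoping: one must handle the two cases \(x_i^g\in K\) and \(x_i^g\notin K\) uniformly, which is cleanly done by noting that in \emph{both} cases \(\EE[c(x_i^g,g(x_i^g)) + c_F(x_{i+1}^g)\mid x_i^g] \le c_F(x_i^g) + b\) (the \(x\notin K\) case has \(+0 \le +b\) since \(b\ge 0\)), and then summing and using the tower property. Everything else is routine manipulation of nonnegative costs; no measurability subtlety beyond what Assumptions \ref{a:regularity of process}–\ref{a:feedback} already grant, and the i.i.d.\ structure of \((w_t)\) is used only in the one spot noted above.
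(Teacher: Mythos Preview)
Your proposal is correct and uses essentially the same ingredients as the paper's proof: the unified one-step inequality \(\EE[c(x_i^g,g(x_i^g)) + c_F(x_{i+1}^g)\mid x_i^g] \le c_F(x_i^g) + b\) valid both inside and outside \(K\), a telescoping sum over \(i=0,\ldots,N-1\), and suboptimality of the policy \((g,\ldots,g)\). The only cosmetic difference is the order of operations---you invoke suboptimality first and then telescope the explicit cost \(V_N(x,(g,\ldots,g))\), whereas the paper carries \(V_N\opt(x)\) through the telescoping and applies suboptimality only at the last step; your ordering is slightly more transparent. For the lower bound, the paper simply says it ``follows immediately from the definition of \(V_N\opt\),'' so your spelled-out argument (isolating the first-stage cost, bounding the remaining \(N-1\) stage costs by the global infimum, and handling the terminal term via the tower property and the i.i.d.\ noise) is more detailed than what the paper provides but entirely in the same spirit.
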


		Proposition \ref{p:V vs final-cost} can be employed in conjunction with Theorem \ref{t:drift} to arrive at stability conditions under additional hypotheses:

		\begin{proposition}
		\label{p:geometric drift}
			Consider the controlled system \eqref{e:gensys} with its accompanying data, and the optimal control problem \eqref{e:MPC problem}. Suppose that Assumption \ref{a:feedback} holds. Assume further that the cost functions satisfy:
			\begin{itemize}[label=\(\circ\), leftmargin=*]
				\item \(\displaystyle{\lim_{\norm{z}\to+\infty} c_F(z) = +\infty}\),
				\item there exist measurable functions \(c_s:\R^d\lra[0, +\infty[\) and \(c_c:\ConSet \lra[0, +\infty[\) such that
					\[
						c(z, v) = c_s(z) + c_c(v)\quad \text{for all }(z, v)\in\R^d\times\ConSet,
					\]
					and
					\[
						\lim_{\norm{z}\to+\infty} c_s(z) = +\infty,
					\]
				\item there exist a constant \(\alpha \in[0, 1[\) and a compact set \(K\subset\R^d\) such that
					\[
						c_s(z) \ge \alpha c_F(z) \quad \text{for all }z\not\in K.
					\]
			\end{itemize}
			Then under the receding horizon policy \(\rhpolicy\), the function \(V_N\opt\) satisfies a geometric drift condition outside some compact subset of \(\R^d\). In particular, for each \(x\in\R^d\) the sequence \(\bigl(\EE^{\rhpolicy}_x\bigl[V_N\opt(x_t)\bigr]\bigr)_{t\in\Nz}\) is bounded.
		\end{proposition}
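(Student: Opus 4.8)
The plan is to convert the right-hand side $-c(x,\pi_0\opt(x)) + b$ of the closed-loop drift inequality \eqref{e:drift condition rh} into an expression of the form $-(1-\lambda_\circ)\,V_N\opt(x)$ valid for every $x$ outside a compact set. That is exactly a geometric drift condition \eqref{e:geometric drift} for $V_N\opt$ under the closed-loop process \eqref{e:rh closed loop}, which is Markovian by \eqref{e:Markovian proof}, so Proposition~\ref{p:FL classical} then delivers boundedness of $\bigl(\EE^{\rhpolicy}_x[V_N\opt(x_t)]\bigr)_{t\in\Nz}$. The ingredients are Theorem~\ref{t:drift} (available since Assumption~\ref{a:feedback} holds), the upper estimate $V_N\opt(x)\le c_F(x)+Nb$ of Proposition~\ref{p:V vs final-cost}, and the three stipulated properties of the cost functions.

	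First I would record, using Theorem~\ref{t:drift} together with the decomposition $c(z,v)=c_s(z)+c_c(v)$ and $c_c\ge 0$, that
	\[
		\EE^{\rhpolicy}_x\bigl[V_N\opt(x_1)\bigr] - V_N\opt(x) \;\le\; -c_s(x) - c_c(\pi_0\opt(x)) + b \;\le\; -c_s(x) + b
		\qquad\text{for every } x\in\R^d.
	\]
	Next, fix any $\gamma\in{]0,\alpha[}$ and set $\lambda_\circ\Let 1-\gamma\in{]0,1[}$. For $z\notin K$ the comparison $c_s(z)\ge\alpha c_F(z)$ gives $c_F(z)\le\alpha\inverse c_s(z)$, hence
	\[
		c_s(z) - \gamma c_F(z) \;\ge\; \bigl(1-\tfrac{\gamma}{\alpha}\bigr)\,c_s(z),
	\]
	and since $1-\gamma/\alpha>0$ and $\lim_{\norm{z}\to+\infty}c_s(z)=+\infty$, there is an $R>0$ such that $c_s(z)-\gamma c_F(z)\ge(1+\gamma N)\,b$ whenever $\norm{z}>R$. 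Put $K'\Let K\cup\{z\in\R^d:\norm{z}\le R\}$, a compact set. Then for $x\notin K'$, using also $V_N\opt(x)\le c_F(x)+Nb$ from Proposition~\ref{p:V vs final-cost},
	\begin{align*}
		\EE^{\rhpolicy}_x\bigl[V_N\opt(x_1)\bigr]
		&\le V_N\opt(x) - c_s(x) + b
		\le V_N\opt(x) - \gamma c_F(x) - \gamma N b \\
		&\le V_N\opt(x) - \gamma\bigl(c_F(x) + N b\bigr)
		\le (1-\gamma)\,V_N\opt(x) = \lambda_\circ\,V_N\opt(x),
	\end{align*}
	which is the desired geometric drift condition for $V_N\opt$ outside $K'$.

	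To conclude I would apply Proposition~\ref{p:FL classical} with $V\Let V_N\opt$, the compact set $K'$, the above $\lambda_\circ$, and $\beta\Let\sup_{x\in K'}\EE^{\rhpolicy}_x[V_N\opt(x_1)]$; the latter is finite because, by Theorem~\ref{t:drift} and Proposition~\ref{p:V vs final-cost} once more, $\EE^{\rhpolicy}_x[V_N\opt(x_1)]\le V_N\opt(x)+b\le c_F(x)+(N+1)b$, which is bounded on the compact set $K'$ as soon as $c_F$ is locally bounded, for instance whenever $c_F$ is continuous. Proposition~\ref{p:FL classical} then gives $\EE^{\rhpolicy}_x[V_N\opt(x_t)]\le\lambda_\circ^t V_N\opt(x)+\beta(1-\lambda_\circ)\inverse$ for all $x$ and $t$; since $V_N\opt(x)<+\infty$ for each $x$ by Assumption~\ref{a:regularity of process}, the sequence $\bigl(\EE^{\rhpolicy}_x[V_N\opt(x_t)]\bigr)_{t\in\Nz}$ is bounded, as claimed.

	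The only step requiring genuine care is the selection $\gamma<\alpha$: the coercive term $c_s$ must simultaneously absorb the multiple $\gamma c_F$ of the final cost---which forces $\gamma$ to lie strictly below the comparison constant $\alpha$---and the additive constant $(1+\gamma N)b$ coming from the bound $b$ of Assumption~\ref{a:feedback} and the $Nb$ of Proposition~\ref{p:V vs final-cost}. The coercivity of $c_F$ itself plays no role in the displayed drift inequality; its purpose is to render $V_N\opt$ norm-like, so that the boundedness conclusion is informative in the sense discussed in \secref{s:stab:review}. Everything else is routine bookkeeping with results already proved.
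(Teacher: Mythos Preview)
Your proof is correct and follows essentially the same route as the paper's: start from the drift inequality of Theorem~\ref{t:drift}, drop the control part \(c_c\), compare \(c_s\) with \(\alpha c_F\) outside \(K\), invoke the upper bound \(V_N\opt\le c_F+Nb\) of Proposition~\ref{p:V vs final-cost}, absorb the residual additive constant by coercivity to obtain a geometric drift for \(V_N\opt\), and conclude via Proposition~\ref{p:FL classical}. The only cosmetic difference is in how the constant is absorbed: the paper uses coercivity of \(V_N\opt\) (via the lower bound in \eqref{e:V vs final cost}) to enlarge the compact set and obtain contraction factor \(1-\alpha/2\), whereas you introduce an auxiliary \(\gamma<\alpha\) and use coercivity of \(c_s\) directly to get contraction factor \(1-\gamma\).
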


		\begin{Example}[The LQ problem]
		\label{ex:FL classical}
			Consider the controlled system \eqref{e:gensys} with \(f(x, u, w) = A x + B u + w\) for matrices \(A\in\R^{d\times d}\) and \(B\in\R^{d\times m}\). Let \(\ConSet = \R^m\) and \(\NoiseSet = \R^d\). Let \(w_0\) have a continuous density on \(\R^d\), \(\EE[w_0] = 0\), \(\EE[w_0 w_0\transp] = \Sigma\) for some non-negative definite and symmetric matrix \(\Sigma\in\R^{d\times d}\). Assume that the pair \((A, B)\) is stabilizable. Then, by \cite[Proposition 11.10.5]{ref:Ber-09}, for every symmetric and positive definite matrix \(Q\in\R^{d\times d}\) there exists a matrix \(K\in\R^{m\times d}\) and a symmetric and positive definite matrix \(P\in\R^{d\times d}\) such that \((A + BK)\transp P (A + BK) - P = -Q\). Consider the policy \(\pi = (g, g, \ldots)\), where \(\R^d\ni x\mapsto g(x) \Let Kx\in\R^m\), and define the function \(\R^d\ni x\mapsto V(x) \Let x\transp P x\in[0, +\infty[\). Then it follows from Proposition \ref{p:FL classical} and standard arguments that the closed-loop process \((x_t)_{t\in\Nz}\) under the policy \(\pi\) is stable in the sense that \(\EE_x[V(x_t)] \le \lambda_\circ^t V(x) + \beta(1-\lambda_\circ)\inverse\) for all \(t\in\Nz\),
			\begin{gather*}
				\lambda_\circ = \frac{1}{2}\biggl(1 - \frac{\sigma_{\min}(Q)}{\sigma_{\max}(P)}\biggr),\\
				K = \bigl\{z\in\R^d\,\big|\, x\transp P x \le \tfrac{2}{\lambda_\circ}\trace(P\Sigma)\bigr\},\\
				\beta = \sup_{z\in K}\bigl\{ x\transp(A + BK)\transp P (A + BK) x + \trace(P\Sigma)\bigr\},
			\end{gather*}
			where \(\sigma_{\min}(M)\) and \(\sigma_{\max}(M)\) denote the minimal and maximal singular values of a matrix \(M\).

			In view of the above computations, we define a symmetric and non-negative definite matrix \(R\in\R^{m\times m}\) such that \(K\transp R K \le Q\), where the relation ``\(\le\)'' between the preceding matrices denotes the standard matrix partial order among symmetric non-negative definite matrices. Let us define cost functions \(c(z, u) \Let (1-\alpha) z\transp Q z + \alpha u\transp R u\) and \(c_F(z) \Let z\transp P z\) for \((z, u)\in\R^d\times \R^m\) and \(\alpha\in[0, 1]\). Straightforward calculations show that \ref{e:boundedness condition} and \ref{e:stabilizing condition} hold with \(g(z) = Kz\) and the preceding definitions of \(c\), \(c_F\), \(\beta\), and the compact set \(K\). Consider now the optimal control problem \eqref{e:MPC problem} for a given \(N\in\N\) and the control set \(\ConSet = \R^m\). By Theorem \ref{t:drift}, we see that a receding horizon controller derived from this optimal control problem ensures that \eqref{e:drift condition rh} holds. It is also possible to verify the hypotheses of Proposition \ref{p:geometric drift} in this case, which implies that for each \(x\in\R^d\) the sequence \(\bigl(\EE^{\rhpolicy}_x\bigl[V_N\opt(x_t)\bigr]\bigr)_{t\in\Nz}\) is bounded.
			\ExampleEnd
		\end{Example}

		While Example \ref{ex:FL classical} is entirely standard, it highlights a few noteworthy features of control of linear systems with affine noise, summarized in the following:
		\begin{remark}
		\label{r:drift condition}
		\mbox{}
		\vspace*{-2ex}
		\begin{enumerate}[label=\alph*), leftmargin=*, align=right, widest=b]
			\item In the context of linear systems, the condition \eqref{e:geometric drift} implies that at all states \(x\) of large enough norm, the control action must be strong enough to achieve this geometric decrease. In the absence of a bound on the magnitude of the control actions, it is possible to synthesize linear feedback policies, such that a geometric drift in terms of quadratic functions \(V\) is attained; for instance, consider the feedback policy \((g, g, \ldots)\) with \(g(x) = Kx\) in Example \ref{ex:FL classical}.
			\item \label{r:drift condition:linear} If the control actions are \emph{bounded}, a control policy whose elements are linear maps of the states is inadmissible. In this case, if the noise has unbounded support, e.g., \(w_t\) is a zero-mean Gaussian with a given variance matrix, then the following four cases appear naturally:
				\begin{itemize}[label=\(\triangleright\), leftmargin=*]
					\item if the system matrix \(A\) has an eigenvalue outside the closed unit disc, with no control policy is it possible to ensure a geometric drift condition with quadratic Lyapunov functions \(V\);
					\item if all eigenvalues of \(A\) are inside the open unit disc, then irrespective of the feedback policy, a geometric drift condition can be found for a quadratic Lyapunov function \(V\), as illustrated in Example \ref{ex:FL classical};
					\item if \(A\) is Lyapunov stable, a constant (as opposed to geometric) negative drift condition for the Lyapunov function \(V(z) \Let \norm{z}\) was demonstrated in \cite{ref:RamChaMilHokLyg-10}; we provide a geometric drift condition under the same setting in Proposition \ref{p:ortho stability} below;
					\item if \(A\) is has eigenvalues on the unit circle but with unequal algebraic and geometric multiplicities, the problem of stabilization under bounded controls remains an open problem; see \cite{ref:ChaRamHokLyg-12} for details.
				\end{itemize}
			\item Consider the scalar version of Example \ref{ex:FL classical} with \((w_t)_{t\in\Nz}\) a sequence of mutually independent standard normal random variables. Since 
				\[
					\PP\Bigl(\inf_{t\in\Nz} w_t = -\infty\text{ and }\sup_{t\in\Nz} w_t = +\infty\Bigr) = 1,
				\]
				it is impossible to assert almost sure convergence of the states to any compact set under any policy. For the same reason, it is also impossible to assert a statement of the form
				\[
					\PP\bigl(\text{\(\exists\ t_0\in\N, K\subset\R\) compact,  such that \(x_t \in K\) for all \(t\ge t_0\)} \bigr) \ge 1-\eps
				\]
				for \(\eps\in\:]0, 1[\) preassigned. In other words, under any feedback policy, almost surely, there will be excursions of the states beyond any given compact set infinitely often over an infinite time horizon. An identical assertion carries over to the multidimensional case.\RemarkEnd
		\end{enumerate}
		\end{remark}

		Preparatory to providing further examples illustrating Theorem \ref{t:drift}, we establish a stability result for a particular class of linear systems, as promised in Remark \ref{r:drift condition}-\ref{r:drift condition:linear}: Consider the controlled system
		\begin{equation}
		\label{e:linsys}
			x_{t+1} = A x_t + B u_t + w_t,\qquad x_0 \text{ given}, \quad t\in\Nz,
		\end{equation}
		for given matrices \(A\in\R^{d\times d}\), \(B\in\R^{d\times m}\), and suppose that \((w_t)_{t\in\Nz}\) is a sequence of independent and identically distributed (i.i.d.) random vectors. Suppose that pair \((A, B)\) is controllable with reachability index \(\reachindex\).\footnote{That is, \(\rank\pmat{B & AB & \cdots & A^{\reachindex -1}B} = d\).} We define \(\reachab(A, M) \Let \pmat{A^{\reachindex -1}M & \cdots & AM & M}\) for a matrix \(M\) with \(d\) rows. For a matrix \(M\) we let \(M\psinv\) denote its Moore-Penrose pseudoinverse. Let \(U_{\max} > 0\) be given, and suppose that \(\norm{u_t} \le U_{\max}\) for all \(t\in\Nz\). For \(r > 0\), we define the radial saturation function \(\R^d\ni z\mapsto \sat_r(z) \Let \min\{r, \norm{z}\}\frac{z}{\norm{z}}\) if \(z\neq 0\) and \(0\) otherwise. Let \(I_d\) denote the \(d\times d\) identity matrix. The following proposition proposes a bounded control function such that a geometric drift condition is attained:
		\begin{proposition}
		\label{p:ortho stability}
			Consider the linear controlled system \eqref{e:linsys} with its accompanying data, suppose that \(A\) is orthogonal, and that \(w_0\) is Gaussian with mean \(0\in\R^d\) and given variance \(\Sigma\in\R^{d\times d}\). Let \(\rho \Let \ln\EE\Bigl[\exp\Bigl(\norm{\reachab(A, I_d) \pmat{w_0\transp & \cdots & w_{\reachindex -1}\transp}\transp}\Bigr)\Bigr]\). Suppose that \(U_{\max} > \rho\), define \(V(x) \Let \epower{\norm{x}}\) for \(x\in\R^d\), and let \(K\Let \{z\in\R^d\mid \norm{z} \le 2\rho\}\). Then under the control actions
			\begin{equation}
			\label{e:ortho stability:controls}
				\pmat{u_{\reachindex t}\\ \vdots \\ u_{\reachindex(t+1)-1}} \Let -\reachab(A, B)\psinv \sat_{U_{\max}}\bigl(A^{\reachindex} x_{\reachindex t}\bigr),\quad t\in\Nz,
			\end{equation}
			the closed-loop process \((x_{\reachindex t})_{t\in\Nz}\) is Markovian, and there exists \(\lambda_\circ \in\:]0, 1[\) such that
			\[
				\EE[V(x_{\reachindex (t+1)})\mid x_{\reachindex t}] \le \lambda_\circ V(x_{\reachindex t})\quad\text{on the set }\{x_{\reachindex t}\not\in K\}.
			\]
			In particular, for each \(x\in\R^d\) the sequence \(\bigl(\EE_x\bigl[\epower{\norm{x_{t}}}\bigr]\bigr)_{t\in\Nz}\) is bounded, and the conditionally probability distributions \(\PP_x(\norm{x_t} > r), t\in\Nz,\) have exponentially thin tails uniformly over \(t\in\Nz\).
		\end{proposition}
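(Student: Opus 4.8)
The plan is to collapse the dynamics over windows of \(\reachindex\) consecutive steps, exhibit the sampled process \((x_{\reachindex t})_{t\in\Nz}\) as a time-homogeneous Markov chain driven by i.i.d.\ ``residual noise,'' verify a geometric drift inequality for \(V\) on this chain, apply Proposition~\ref{p:FL classical}, and finally upgrade boundedness from the sampled instants to all \(t\in\Nz\).

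\emph{Block reduction and Markovianity.}
First I would iterate \eqref{e:linsys} over one window: from \(x_{\reachindex t}\),
\[
	x_{\reachindex(t+1)} = A^{\reachindex} x_{\reachindex t} + \reachab(A, B)\pmat{u_{\reachindex t}\\ \vdots\\ u_{\reachindex(t+1)-1}} + N_t,\qquad N_t \Let \reachab(A, I_d)\pmat{w_{\reachindex t}^\transp & \cdots & w_{\reachindex(t+1)-1}^\transp}^\transp.
\]
Since \((A,B)\) is controllable with reachability index \(\reachindex\), the matrix \(\reachab(A,B)\) has full row rank \(d\), so \(\reachab(A,B)\reachab(A,B)\psinv = I_d\); substituting the control law \eqref{e:ortho stability:controls} therefore collapses the window recursion to
\[
	x_{\reachindex(t+1)} = A^{\reachindex} x_{\reachindex t} - \sat_{U_{\max}}\bigl(A^{\reachindex} x_{\reachindex t}\bigr) + N_t.
\]
Because the \(w_t\)'s are i.i.d., the windows \((w_{\reachindex t},\dots,w_{\reachindex(t+1)-1})\) are i.i.d., and hence so are the \(N_t\)'s, each distributed as \(\reachab(A,I_d)\pmat{w_0\transp & \cdots & w_{\reachindex-1}\transp}^\transp\); being a fixed linear image of a Gaussian vector, this random vector is Gaussian, so \(\EE\bigl[\epower{\norm{N_t}}\bigr] = \epower{\rho}\) is finite (and \(\rho > 0\) unless \(\Sigma = 0\), a degenerate case we set aside). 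As within each window the controls are a deterministic function of \(x_{\reachindex t}\) while \(N_t\) is independent of \(x_0,\dots,x_{\reachindex t}\), the argument of \eqref{e:Markovian proof} shows that \((x_{\reachindex t})_{t\in\Nz}\) is a time-homogeneous Markov chain.

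\emph{Geometric drift.}
Orthogonality of \(A\) gives \(\norm{A^{\reachindex} x_{\reachindex t}} = \norm{x_{\reachindex t}}\), and a direct computation gives \(\norm{y - \sat_{U_{\max}}(y)} = (\norm{y} - U_{\max})\mx 0\); hence \(\norm{x_{\reachindex(t+1)}} \le (\norm{x_{\reachindex t}} - U_{\max})\mx 0 + \norm{N_t}\). Conditioning on \(x_{\reachindex t}\) and using independence of \(N_t\),
\[
	\EE\bigl[V(x_{\reachindex(t+1)})\mid x_{\reachindex t}\bigr] \le \epower{(\norm{x_{\reachindex t}} - U_{\max})\mx 0}\,\EE\bigl[\epower{\norm{N_t}}\bigr] = \epower{\bigl((\norm{x_{\reachindex t}} - U_{\max})\mx 0\bigr) + \rho}.
\]
On \(\{x_{\reachindex t}\notin K\} = \{\norm{x_{\reachindex t}} > 2\rho\}\), dividing the right-hand side by \(V(x_{\reachindex t}) = \epower{\norm{x_{\reachindex t}}}\) yields \(\epower{\rho - U_{\max}}\) when \(\norm{x_{\reachindex t}}\ge U_{\max}\) and at most \(\epower{\rho - \norm{x_{\reachindex t}}} < \epower{-\rho}\) otherwise, so the drift holds with \(\lambda_\circ \Let \epower{\rho - U_{\max}}\mx\epower{-\rho}\), which lies in \(]0,1[\) because \(U_{\max} > \rho > 0\). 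To apply Proposition~\ref{p:FL classical} on the compact ball \(K\), I would note that \(\norm{x_{\reachindex}} \le 2\rho + \norm{N_0}\) whenever \(x_0\in K\), so \(\beta \Let \sup_{z\in K}\EE_z[V(x_{\reachindex})] \le \epower{3\rho} < \infty\); Proposition~\ref{p:FL classical} then gives \(\EE_x[V(x_{\reachindex t})] \le \lambda_\circ^{\,t} V(x) + \beta(1-\lambda_\circ)\inverse\) for all \(x\in\R^d\) and \(t\in\Nz\).

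\emph{All times and tail bound.}
To pass from the instants \(\reachindex t\) to an arbitrary \(t = \reachindex s + r\) with \(0\le r < \reachindex\), iterate \eqref{e:linsys} for \(r\) further steps: orthogonality of \(A\), the bound \(\norm{u_j}\le \norm{\reachab(A,B)\psinv}\,U_{\max}\) on the (bounded) controls, and \(\EE[\epower{\norm{w_0}}] < \infty\) (again Gaussian tails) give \(\EE_x[\epower{\norm{x_{\reachindex s + r}}}] \le C_{\reachindex}\,\EE_x[\epower{\norm{x_{\reachindex s}}}]\) for a constant \(C_{\reachindex}\) depending only on \(\reachindex\), \(A\), \(B\), \(U_{\max}\) and the law of \(w_0\); combined with the previous step this yields \(\sup_{t\in\Nz}\EE_x[\epower{\norm{x_t}}] < \infty\), and Markov's inequality \(\PP_x(\norm{x_t} > r) \le \epower{-r}\EE_x[\epower{\norm{x_t}}]\) gives the claimed uniform exponential tail bound. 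The one step calling for genuine care is the block reduction: the identity \(\reachab(A,B)\reachab(A,B)\psinv = I_d\) together with recognising the residual \(N_t\) as i.i.d.\ noise with a finite exponential moment is precisely what produces the geometric drift, after which everything is verification and bookkeeping, Gaussianity entering only to guarantee finiteness of \(\EE[\epower{\norm{N_0}}]\) and \(\EE[\epower{\norm{w_0}}]\).
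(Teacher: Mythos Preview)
Your proof is correct and follows essentially the same route as the paper's: block reduction to the \(\reachindex\)-sampled chain, triangle inequality plus orthogonality of \(A\) to obtain a geometric drift of \(V = \epower{\norm{\cdot}}\) outside \(K\), Proposition~\ref{p:FL classical}, and then passage from the sampled instants to all times. Your treatment is in fact slightly more careful than the paper's in one place---you explicitly split the cases \(\norm{x_{\reachindex t}}\ge U_{\max}\) and \(2\rho < \norm{x_{\reachindex t}} < U_{\max}\), which the paper's computation glosses over---and your use of Markov's inequality for the tail bound is more direct than the paper's Fubini-based argument.
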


		Proposition \ref{p:ortho stability} is of independent interest. Stability of \eqref{e:linsys} under bounded controls was considered in \cite{ref:RamChaMilHokLyg-10}, where the authors demonstrated that the same control actions as in \eqref{e:ortho stability:controls}, but under weaker assumptions on the noise,\footnote{To be precise, it was assumed that \(\sup_{t\in\Nz}\EE[\norm{w_t}^4] < +\infty\). The result in \cite{ref:RamChaMilHokLyg-10}, therefore, applies to noise sequences \((w_t)_{t\in\Nz}\) more general than Gaussians.} led to a constant negative drift of the function \(\norm{\cdot}\) outside a certain compact set of the closed-loop sub-sampled process \((x_{\reachindex t})_{t\in\Nz}\). The technical tools in \cite{ref:RamChaMilHokLyg-10} relied on the considerably involved results of \cite{ref:PemRos-99}; in contrast, the proof of Proposition \ref{p:ortho stability} that we provide here relies only on the basic Proposition \ref{p:FL classical}---namely, a geometric drift condition expressed in terms of \(\epower{\norm{\cdot}}\). Note that, Proposition \ref{p:ortho stability} asserts boundedness of \(\bigl(\EE_x\bigl[\epower{\norm{x_t}}\bigr]\bigr)_{t\in\Nz}\), which is a stronger statement compared to, and indeed implies, boundedness of \(\bigl(\EE_x\bigl[\norm{x_t}^2\bigr]\bigr)_{t\in\Nz}\) asserted in \cite{ref:RamChaMilHokLyg-10}. Compare, in particular, that the conditional distributions \(\PP_x(\norm{x_t} > r)\) have exponentially thin tails, by Proposition \ref{p:ortho stability}, if the maximum magnitude of the control actions is large enough, while the main result of \cite{ref:RamChaMilHokLyg-10} asserts that the corresponding distributions have tails that decay faster than inverse quadratically as \(r\) grows large.

		\begin{Example}
		\label{ex:linear scalar}
			Consider the controlled system \eqref{e:gensys} with \(f(x, u, w) = x + u + w\), where \(x, u, w\in\R\). Let \(w_0\) have a continuous density on \(\R\), \(\EE[w_0] = 0\), \(\EE[\abs{w_0}^4] < +\infty\). Fix \(N\in\N\). Let us investigate the possibility of stability under a receding horizon policy derived from the \(N\)-horizon optimal control problem\footnote{Here \(\indic{A}(\cdot)\) denotes the indicator function of the set \(A\), defined as \(\indic{A}(z) = 1\) if \(z\in A\) and \(0\) otherwise.}
			\begin{equation}
			\label{e:scalar opt problem}
			\begin{aligned}
				\minimize	& \quad \EE^{\pi}_x\biggl[\sum_{t=0}^{N-1} \indic{\R\setmin[-2, 2]}(x_t) + \norm{x_N}\biggr]\\
				\sbjto		& \quad \begin{cases}
								\pi\in\Pi			& \text{where \(\Pi\) a class of Markovian policies},\\
								\pi_i(z) \in[-1, 1] & \text{for all \(i = 0, \ldots, N-1\) and \(z\in\R\)},\\
								\text{dynamics \eqref{e:gensys}} & \text{with \(f(x, u, w) = x + u + w\)}.
							\end{cases}
			\end{aligned}
			\end{equation}
			The cost-per-stage function \(c(z, u) = \indic{\R\setmin[-2, 2]}(z)\) ensures that for each realization of the random noise, the cost grows proprtionately to the duration that the state stays out of the set \([-2, 2]\). The policy that solves the minimization problem \eqref{e:scalar opt problem} drives the state inside \([-2, 2]\) as fast as possible, and the final cost \(c_F\) regulates the final state close to \(0\). In the light of Proposition \ref{p:PR conditions} (e.g., following the arguments in \cite{ref:RamChaMilHokLyg-10},) it is not difficult to verify that the policy \((g, g, \ldots)\), with \(g(x) = -\sat(x)\) ensures that\footnote{Here \(\sat(\cdot)\) is the standard saturation function defined as \(\sat(z) = z\) for \(z\in[-1, 1]\), \(1\) if \(z > 1\), and \(-1\) otherwise.}
			\[
				\EE\bigl[\abs{x_{t+1}}\,\big|\, x_{t}\bigr] - \abs{x_{t}} \le -1\quad \text{on the set }\{\abs{x_{t}} > 2\}, \quad t\in\Nz.
			\]
			Note that the closed-loop system is Markovian. One sees after standard computations that the optimal control problem \eqref{e:scalar opt problem} with \(K \Let \{z\in\R\mid \abs{z} \le 2\}\) verifies both \ref{e:boundedness condition} and \ref{e:stabilizing condition}. The selection of the cost-per-stage function \(c\) is not unique; e.g., \(c(z, u) = \frac{1}{2}\bigl(\indic{\R\setmin[-2, 2]}(z) + \indic{[-1, 1]\setmin[-\frac{1}{2}, \frac{1}{2}]}(u)\bigr)\) works just as fine insofar as the matter of satisfying \ref{e:boundedness condition} and \ref{e:stabilizing condition} is concerned. Numerical tractability of the problem \eqref{e:scalar opt problem}, even under the assumption that \((w_t)_{t\in\Nz}\) is a sequence of independent and identically distributed standard normal random variables, is a non-trivial matter.\ExampleEnd
		\end{Example}

		\begin{Example}[Example \ref{ex:linear scalar} continued]
		\label{ex:linear scalar re}
			Consider the controlled system \eqref{e:gensys} with \(f(x, u, w) = x + u + w\), where \(x, u, w\in\R\). Let \((w_t)_{t\in\Nz}\) be a sequence of i.i.d.\ Gaussian random variables with zero-mean and variance \(\sigma^2\) for some given \(\sigma > 0\), and define \(\rho \Let \ln\bigl(\sigma\sqrt{\frac{2}{\pi}}\bigr)\). Let \(U_{\max} > \rho\) be given, and define \(\lambda_\circ \Let \epower{\rho - U_{\max}}\). Fix \(N\in\N\). Let us investigate the possibility of stability under a receding horizon policy derived from the \(N\)-horizon optimal control problem
			\begin{equation}
			\label{e:scalar opt problem re}
			\begin{aligned}
				\minimize	& \quad \EE^{\pi}_x\biggl[(1-\lambda_\circ)\sum_{t=0}^{N-1} \epower{\abs{x_t}} + \epower{\abs{x_N}}\biggr]\\
				\sbjto		& \quad \begin{cases}
								\pi\in\Pi						& \text{where \(\Pi\) a class of Markovian policies},\\
								\abs{\pi_i(z)} \le U_{\max}		& \text{for all \(i = 0, \ldots, N-1\), \(z\in\R\)},\\
								\text{dynamics \eqref{e:gensys}}& \text{with \(f(x, u, w) = x + u + w\)}.
							\end{cases}
			\end{aligned}
			\end{equation}
			Namely, we have the final cost \(c_F(z) = \epower{\abs{z}}\) and the cost-per-stage function \(c(z, u) = (1-\lambda_\circ)\epower{\abs{z}}\). By Proposition \ref{p:ortho stability} it follows that \ref{e:stabilizing condition} holds for some compact \(K\subset\R\), and it is straightforward to verify \ref{e:boundedness condition}. Theorem \ref{t:drift} guarantees that a receding horizon controller derived from \eqref{e:scalar opt problem re} ensures \eqref{e:drift condition rh}. An extension to the multidimensional case can be readily obtained with the technical support of Proposition \ref{p:ortho stability}. It also follows from standard computations that the hypotheses of Proposition \ref{p:geometric drift} hold in this case, which implies that for each \(x\in\R\) the sequence \(\bigl(\EE^{\rhpolicy}_x\bigl[V_N\opt(x_t)\bigr]\bigr)_{t\in\Nz}\) is bounded.
			\ExampleEnd
		\end{Example}


%

	\section{Performance under receding horizon control}
	\label{s:perf}
		In this section we study performance of closed-loop systems under receding horizon control. Our objective here is to arrive at quantitative bounds on the performance of receding horizon policies over an infinite temporal horizon.

		To this end, we must first select a performance index. We contend that the cost-per-stage function \(c\) is a natural candidate with which performance at each time step may be measured. For one, while the final cost \(c_F\) plays an important role in the problem \eqref{e:MPC problem}, the first element \(\pi_0\opt\) of the optimal control policy \(\pi\opt\) enters the function \(c\) but not \(c_F\); since a receding horizon policy is constructed out of \(\pi_0\opt\), the function \(c\) is perhaps a more natural candidate compared to \(c_F\) for measuring performance at each time step. Moreover, the function \(c\) involves both the states and the control actions, while \(c_F\) involves only the states; as such, the expected value of \(c(x_t, u_t)\) at time \(t\) reflects the performance measured with respect to both the states and the control actions.

		In the setting of the dynamical system \eqref{e:gensys} involving stochastic uncertainties, typically, a sum of cost-per-stage functions over \(n\) steps grows with \(n\) rather quickly, and the expected total cost
		\[
			\EE_x\biggl[\sum_{t=0}^{+\infty} c(x_t, u_t)\biggr]
		\]
		may not be suitable for measuring performance. For instance, in the case of the standard optimal linear quadratic regulator, under the optimal policy the quantity \(\EE_x\bigl[\sum_{t=0}^n (x_t\transp Q x_t + u_t\transp R u_t)\bigr]\) grows linearly with \(n\) (under mild assumptions on the non-negative definite states- and control-weight matrices \(Q\) and \(R\), respectively); consequently, the expected total cost over an infinite horizon is not bounded. A more appropriate measure of performance is the \emph{long-run expected average cost}, measured in terms of the cost-per-stage function \(c\), defined by
		\begin{equation}
		\label{e:expected avg cost}
			\limsup_{n\to\infty} \frac{1}{n+1}\EE_x\biggl[\sum_{t=0}^n c(x_t, u_t)\biggr].
		\end{equation}
		This is the performance index that we adopt here. In particular, the quantity in \eqref{e:expected avg cost} is well-defined for the linear quadratic problem under mild hypothesis (stabilizability of the underlying linear system). The intuition is clear: the quantity in \eqref{e:expected avg cost} measures the cost averaged over time and averaged across all possible realizations of the process \((x_t)_{t\in\Nz}\). Observe that all phenomena that occur over a finite temporal horizon, or are transient in the sense that they asymptotically die out, do not affect the index \eqref{e:expected avg cost}.

		We next provide our second main result---an estimate of performance under the receding horizon policy \(\rhpolicy\) in terms of the long-run expected average cost. It also shows how stability of the closed-loop process influences the long-run expected average cost.
		\begin{theorem}
		\label{t:ergodic cost}
			Consider the controlled system \eqref{e:gensys} with its accompanying data, and the optimal control problem \eqref{e:MPC problem}. Let \(\rhpolicy \Let (\pi_0\opt, \pi_0\opt, \ldots)\) denote the receding horizon policy derived from \eqref{e:MPC problem}, and for a measurable feedback \(\tilde g\) we define
			\begin{equation}
			\label{e:T def}
				T_{\tilde g}(z) \Let c(z, \tilde g(z)) - c_F(z) + \EE\bigl[ c_F\circ f(z, \tilde g(z), w_0)\bigr],\quad z\in\R^d.
			\end{equation}
			Then the following hold:
			\begin{enumerate}[label={\rm (\ref{t:ergodic cost}.\roman*)}, align=left, leftmargin=*, widest=ii]
				\item \label{t:ergodic cost:ineq} If Assumption \ref{a:opt existence} holds, then for every \(x\in\R^d\) and \(k\in\N\),
					\begin{equation}
					\label{e:ergodic cost:ineq}
					\begin{aligned}
						\EE^{\rhpolicy}_x\biggl[\sum_{\ell=0}^{k} c(x_\ell, u_\ell)\biggr] & \le V_N\opt(x) - \EE^{\rhpolicy}_x\bigl[V_N\opt(x_{k+1})\bigr]\\
						& \quad + \sum_{\ell=0}^k \EE^{\rhpolicy}_x\Bigl[ \EE^{\pi\opt}\bigl[ T_{\tilde g}(x_{\ell+N})\,\big|\, x_\ell\bigr]\Bigr].
					\end{aligned}
					\end{equation}
				\item \label{t:ergodic cost:bound} If Assumption \ref{a:feedback} holds and
					\begin{equation}
					\label{e:ergodic cost:asymptote}
						\lim_{n\to+\infty} \frac{\EE^{\rhpolicy}_x\bigl[V_N\opt(x_n)\bigr]}{n} = 0 \quad \text{for all \(x\in\R^d\)},
					\end{equation}
					then
					\begin{equation}
					\label{e:ergodic cost:bound}
						\limsup_{k\to+\infty} \frac{1}{k+1} \EE^{\rhpolicy}_x\biggl[\sum_{\ell=0}^{k} c(x_\ell, u_\ell)\biggr] \le b\quad\text{for all \(x\in\R^d\)}.
					\end{equation}
					In particular, if for all \(x\in\R^d\) the sequence \(\bigl(\EE^{\rhpolicy}_x\bigl[V_N\opt(x_t)\bigr]\bigr)_{t\in\Nz}\) is bounded, then \eqref{e:ergodic cost:bound} holds.
			\end{enumerate}
		\end{theorem}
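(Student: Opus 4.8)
The plan is to combine the familiar ``shift the optimal policy, append a stabilizing tail'' suboptimality estimate with a telescoping sum. For the one-step estimate, fix $\ell\in\Nz$ and condition on $x_\ell$: by \ref{a:opt existence} there is an $N$-stage optimal policy $\pi\opt=(\pi_0\opt,\dots,\pi_{N-1}\opt)$ for \eqref{e:MPC problem} with boundary value $x_\ell$, and under $\rhpolicy$ one has $u_\ell=\pi_0\opt(x_\ell)$ and $x_{\ell+1}=f(x_\ell,\pi_0\opt(x_\ell),w_\ell)$. I would compare $V_N\opt(x_{\ell+1})$ against the cost of the candidate $N$-stage policy $(\pi_1\opt,\dots,\pi_{N-1}\opt,\tilde g)$ run from $x_{\ell+1}$: since the $w_t$ are i.i.d., the conditional law given $x_\ell$ of the trajectory so produced is that of the optimal trajectory $(x_\ell,x_{\ell+1},\dots,x_{\ell+N})$ from $x_\ell$, prolonged by one step under $\tilde g$. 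Taking $\EE[\,\cdot\mid x_\ell]$ of the optimality inequality $V_N\opt(x_{\ell+1})\le V_N(x_{\ell+1},(\pi_1\opt,\dots,\pi_{N-1}\opt,\tilde g))$, expanding \eqref{e:MPCcost}, and cancelling the terms common to $V_N\opt(x_\ell)$ leaves
\[ \EE\bigl[V_N\opt(x_{\ell+1})\bigm|x_\ell\bigr]-V_N\opt(x_\ell)\le -c(x_\ell,u_\ell)+\EE^{\pi\opt}\bigl[T_{\tilde g}(x_{\ell+N})\bigm|x_\ell\bigr], \]
with $T_{\tilde g}$ as in \eqref{e:T def}; for $\tilde g=g$ with $T_g\le b$ this is the content of Theorem \ref{t:drift}.

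Next I would rearrange this to $c(x_\ell,u_\ell)\le V_N\opt(x_\ell)-\EE[V_N\opt(x_{\ell+1})\mid x_\ell]+\EE^{\pi\opt}[T_{\tilde g}(x_{\ell+N})\mid x_\ell]$, apply $\EE^{\rhpolicy}_x[\,\cdot\,]$ (the tower property applies since $(x_t)_{t\in\Nz}$ is Markov under $\rhpolicy$ by \eqref{e:Markovian proof}), sum over $\ell=0,\dots,k$, and telescope the $V_N\opt$ terms; this is exactly \eqref{e:ergodic cost:ineq}, so \ref{t:ergodic cost:ineq} follows, the finiteness needed for the rearrangement and summation being furnished by \ref{a:regularity of process}. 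For \ref{t:ergodic cost:bound}, take $\tilde g=g$ from Assumption \ref{a:feedback}: condition \ref{e:boundedness condition} reads $\sup_{z\in K}T_g(z)\le b$ and \ref{e:stabilizing condition} reads $T_g(z)\le0$ for $z\notin K$, so (using $b\ge0$) $T_g\le b$ everywhere, whence $\EE^{\pi\opt}[T_g(x_{\ell+N})\mid x_\ell]\le b$ and \eqref{e:ergodic cost:ineq} becomes $\EE^{\rhpolicy}_x[\sum_{\ell=0}^{k}c(x_\ell,u_\ell)]\le V_N\opt(x)-\EE^{\rhpolicy}_x[V_N\opt(x_{k+1})]+(k+1)b$. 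Dividing by $k+1$ and letting $k\to+\infty$: $V_N\opt(x)/(k+1)\to0$, while $-\EE^{\rhpolicy}_x[V_N\opt(x_{k+1})]/(k+1)$ is $\le0$ and tends to $0$ under \eqref{e:ergodic cost:asymptote}, so the $\limsup$ is $\le b$, which is \eqref{e:ergodic cost:bound}. The ``in particular'' clause is immediate, since a bounded sequence $(\EE^{\rhpolicy}_x[V_N\opt(x_t)])_{t\in\Nz}$ satisfies \eqref{e:ergodic cost:asymptote}.

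The algebra of the one-step estimate and the telescoping are routine; the delicate part is the nested expectation $\EE^{\rhpolicy}_x\bigl[\EE^{\pi\opt}[\,\cdot\mid x_\ell]\bigr]$. Since $\pi\opt$ depends on the conditioning state $x_\ell$, I would need a measurable selection $x\mapsto\pi\opt(\cdot\,;x)$ and would use the Markov property together with the i.i.d.\ structure of $(w_t)$ to graft the $N$-step ``inner'' dynamics onto the receding-horizon ``outer'' dynamics consistently; I would also have to check that the shifted-and-appended policy $(\pi_1\opt,\dots,\pi_{N-1}\opt,\tilde g)$ is admissible in $\Pi$ and that the relevant expectations are finite so the rearrangements are legitimate. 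This measure-theoretic splicing, rather than any single inequality, is where the proof has to be carried out with care.
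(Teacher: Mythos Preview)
Your proposal is correct and follows essentially the same route as the paper: derive the one-step inequality by comparing \(V_N\opt(x_{\ell+1})\) with the cost of the shifted-and-extended policy \((\pi_1\opt,\ldots,\pi_{N-1}\opt,\tilde g)\), rearrange, take \(\EE^{\rhpolicy}_x\), telescope, and for part \ref{t:ergodic cost:bound} set \(\tilde g=g\), use \ref{a:feedback} to get \(T_g\le b\), divide by \(k+1\) and pass to the \(\limsup\). Your presentation is in fact slightly more streamlined than the paper's (which conditions first on \(x_{n+1}\) and then on \(x_n\)), and your parenthetical remark that \(-\EE^{\rhpolicy}_x[V_N\opt(x_{k+1})]/(k+1)\le 0\) already suffices for the upper bound is a nice observation; the measure-theoretic caveats you flag at the end are genuine but are likewise left implicit in the paper.
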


		\begin{remark}
		\label{r:performance}
		\mbox{}
		\vspace*{-2ex}
			\begin{enumerate}[label=\alph*), align=left, widest=b, leftmargin=*]
				\item The estimate \eqref{e:ergodic cost:ineq} in part \ref{t:ergodic cost:ineq} is quite general, and holds under the mild hypotheses \ref{a:opt existence} of a technical nature. The part \ref{t:ergodic cost:bound} follows from \ref{t:ergodic cost:ineq} under the additional Assumption \ref{a:feedback}, as will be evident from the proof of Theorem \ref{t:ergodic cost} presented in Appendix \ref{s:app:perf}.
				\item We reiterate that the bound on the expected average cost in \eqref{e:ergodic cost:bound} corresponds to the receding horizon policy \(\rhpolicy\), not the stabilizing policy \((g, g, \ldots)\). Indeed, although Assumption \ref{a:feedback} stipulates the existence of a stabilizing feedback \(g\), this feedback controller is never applied. From the proof of Theorem \ref{t:ergodic cost} one sees that the condition \eqref{e:drift condition rh} in Theorem \ref{t:drift} plays a crucial role in \ref{t:ergodic cost:bound}.
				\item The condition \eqref{e:ergodic cost:asymptote} is technical in nature. In most practical cases, one ensures boundedness of \(\bigl(\EE^{\rhpolicy}_x\bigl[V_N\opt(x_t)\bigr]\bigr)_{t\in\Nz}\) with the help of Theorem \ref{t:drift}, and consequently \eqref{e:ergodic cost:bound} holds.
				\item Theorem \ref{t:ergodic cost} provides a quantitative bound on the performance measured in terms of the cost-per-stage function \(c\). Since the stability conditions presented in \secref{s:stab:cost} also involve the function \(c\), it is possible to establish a connection between the performance bound \eqref{e:ergodic cost:bound} with the stability conditions in Assumption \ref{a:feedback}. In general, establishing such a connection may not be possible if stability under the receding horizon policy \(\rhpolicy\) is ensured by means of adjoining an appropriate constraint, as discussed in \ref{stab:constraint} above; indeed, such a constraint may have no relation whatsoever to the cost functions \(c\) and \(c_F\). Nevertheless, the inequality in \eqref{e:ergodic cost:ineq} holds irrespective of whether Assumption \ref{a:feedback} is satisfied or not. Consequently, if a constraint is adjoined to the problem \eqref{e:MPC problem} such that the process \((x_t)_{t\in\Nz}\) under the receding horizon policy \(\rhpolicy\) satisfies \eqref{e:ergodic cost:asymptote}, and the sum on the right-hand side of \eqref{e:ergodic cost:ineq} increases at most linearly with \(k\), then a bound on the long-run expected average cost can be extracted. Naturally, these verifications, which are likely to be case-specific, need additional analysis. 
				\item A direct optimal control problem involving minimization of the long-run expected average cost criterion for dynamical systems, e.g.,
					\begin{equation}
					\label{e:ex avg cost:opt problem}
					\begin{aligned}
						\minimize	& \quad \limsup_{n\to+\infty} \frac{1}{n+1}\EE^\pi_x\biggl[\sum_{t=0}^n c(x_t, u_t)\biggr]\\
						\sbjto		& \quad \begin{cases}
										\pi\in\Pi,\\
										\text{dynamics \eqref{e:gensys}},
							\end{cases}
					\end{aligned}
					\end{equation}
					is generally difficult to solve both analytically and numerically---see \cite[Chapter 5]{ref:hernandez-lerma1} for further details. The numerical value of the bound \(b\) in \eqref{e:ergodic cost:bound} on the performance index \eqref{e:expected avg cost} under the receding horizon technique may be employed as a measure to decide whether to adopt a receding horizon strategy as an alternative to a direct solution to the minimization problem \eqref{e:ex avg cost:opt problem}.
				\RemarkEnd
			\end{enumerate}
		\end{remark}

		\begin{Example}[Example \ref{ex:FL classical} cont'd]
			Consider the linear system in Example \ref{ex:FL classical} and the selection of cost functions \(c\) and \(c_F\) as in the final part of Example \ref{ex:FL classical}. Standard computations, e.g., as in \cite[Chapter 3]{ref:hernandez-lerma1}, show that the function \(V_N\opt\) is quadratic. In conjunction with the final calculations in Example \ref{ex:FL classical}, this shows that \eqref{e:drift condition rh} is a geometric drift condition in this case. Proposition \ref{p:FL classical} implies, therefore, that under receding horizon control derived from \eqref{e:MPC problem}, for each \(x\in\R^d\) the sequence \(\bigl(\EE^{\rhpolicy}_x\bigl[V_N\opt(x_n)\bigr]\bigr)_{n\in\Nz}\) is bounded. Finally, by Theorem \ref{t:ergodic cost} part \ref{t:ergodic cost:bound} we see that for each \(x\in\R^d\)
			\[
				\limsup_{n\to\infty}\frac{1}{n+1} \EE^{\rhpolicy}_x\biggl[\sum_{t=0}^n c(x_t, u_t)\biggr] \le \beta,
			\]
			where \(\beta\) is the constant defined in Example \ref{ex:FL classical}.
			\ExampleEnd
		\end{Example}

		\begin{Example}[Example \ref{ex:linear scalar} cont'd]
			A direct application of Proposition \ref{p:geometric drift} does not appear to be possible. However, one can verify that for each \(x\in\R^d\) the sequence \(\bigl(\EE^{\rhpolicy}_x\bigl[V_N\opt(x_t)\bigr]\bigr)_{t\in\Nz}\) is bounded by following the arguments in \cite{ref:RamChaMilHokLyg-10} with the support of Proposition \ref{p:PR conditions}. This in turn implies by Theorem \ref{t:ergodic cost} that the long-run expected average cost is finite under the receding horizon policy extracted from \eqref{e:scalar opt problem}.
			\ExampleEnd
		\end{Example}

		\begin{Example}[Example \ref{ex:linear scalar re} cont'd]
			In this case it is possible to directly apply Proposition \ref{p:geometric drift}; we skip the standard computations needed to verify its hypotheses. Consequently, the sequence \(\bigl(\EE^{\rhpolicy}_x\bigl[V_N\opt(x_t)\bigr]\bigr)_{t\in\Nz}\) is bounded. By Theorem \ref{t:ergodic cost} this implies that the long-run expected average cost is finite under the receding horizon policy extracted from \eqref{e:scalar opt problem}.
			\ExampleEnd
		\end{Example}

	\appendix

	\section{}
	\label{s:app:stab}
		This appendix collects our proofs of the results presented in \secref{s:stab}.

		\begin{proof}[Proof of Proposition \ref{p:FL classical}]
			A proof of this proposition is standard; we include it merely for completeness. The assertion follows from the Markovian property of \((x_t)_{t\in\Nz}\) and an iteration scheme as follows:
			\begin{align*}
				\EE_x[V(x_t)] & = \EE_x\bigl[\EE[V(x_t)\mid x_{t-1}]\bigr]\\
					& \le \EE_x\bigl[\lambda_\circ V(x_{t-1})\indic{\R^d\setminus K}(x_{t-1}) + \beta \indic{K}(x_{t-1})\bigr]\\
					& \le \lambda_{\circ} \EE_x\bigl[V(x_{t-1})\bigr] + \beta \PP_x(x_{t-1}\in K)\\
					& \vdots\\
					& \le \lambda_\circ^t V(x) + \beta \sum_{k=0}^{t-1} \lambda_\circ^{t-1-k} \PP_x(x_{k}\in K)\\
					& \le \lambda_\circ^t V(x) + \frac{\beta}{1-\lambda_\circ}.\qedhere
			\end{align*}
		\end{proof}
		
		\begin{proof}[Proof of Proposition \ref{p:PR conditions}]
			This is a special case of \cite[Theorem 1]{ref:PemRos-99}.
		\end{proof}

		\begin{proof}[Proof of Theorem \ref{t:drift}]
			From the last \(N-1\) elements of the optimal policy \(\pi\opt\) we derive the \(N\)-length policy \(\tilde\pi \Let (\pi_1\opt, \ldots, \pi_{N-1}\opt, g)\), where \(g:\R^d\lra\ConSet\) is the feedback function in Assumption \ref{a:feedback}. Let \(\tilde V_N(x) \Let V_N(\tilde\pi, x)\). Recall that the sequence \((x_t\opt)_{t=0}^{N-1}\) is generated by the recursion \eqref{e:optimal traj}. Then, by optimality of the policy \(\pi\opt\),
			\begin{align*}
				& \EE_x\bigl[V_N\opt(x_1\opt)\bigr] - V_N\opt(x)\\
				& \le \EE_x\bigl[ \tilde V_N(x_1\opt)\bigr] - V_N\opt(x)\\
				& = \EE_x\biggl[ \EE\biggl[ \sum_{t=1}^{N-1} c(x_t\opt, \pi_t\opt(x_t\opt)) + c(x_N\opt, g(x_N\opt)) + c_F\bigl( f(x_N\opt, g(x_N\opt), w_N)\bigr) \,\bigg|\, x_1\opt\biggr]\biggr]\\
				& \qquad - \EE_x\biggl[ \sum_{t=0}^{N-1} c(x_t\opt, \pi_t\opt(x_t\opt)) + c_F(x_N\opt)\biggr] \\
				& = - c(x, \pi_0\opt(x)) + \EE_x\Bigl[ c(x_N\opt, g(x_N\opt)) + c_F\circ f(x_N\opt, g(x_N\opt), w_N) - c_F(x_N\opt)\Bigr].
			\end{align*}
			If the condition \ref{a:feedback} holds, then the tower property of conditional expectations implies that
			\begin{align*}
				& \EE_x\Bigl[c(x_N\opt, g(x_N\opt)) + c_F\circ f(x_N\opt, g(x_N\opt), w_N) - c_F(x_N\opt)\Bigr]\\
				& = \EE_x\Bigl[\EE\bigl[c(x_N\opt, g(x_N\opt)) + c_F\circ f(x_N\opt, g(x_N\opt), w_N) - c_F(x_N\opt)\,\big|\,\{x_\ell\opt\}_{\ell=1}^N\bigr]\Bigr]\\
				& \le \EE_x\Bigl[ b \indic{\{x_N\opt\in K\}}\Bigr]\\
				& = b \PP_x(x_N\opt\in K).
			\end{align*}
			Substituting back we obtain
			\[
				\EE_x\bigl[V_N\opt(x_1\opt)\bigr] - V_N\opt(x) \le -c(x, \pi_0\opt(x)) + b,
			\]
			which proves \eqref{e:drift condition}. The assertion \eqref{e:drift condition rh} follows from the facts that \(\rhpolicy = (\pi_0\opt, \pi_0\opt, \ldots)\), and that the closed-loop process is Markovian under \(\rhpolicy\).
		\end{proof}

		\begin{proof}[Proof of Proposition \ref{p:V vs final-cost}]
			Fix \(x\in\R^d\). The first inequality \(c(x, \pi_0\opt(x)) \le V_N\opt(x)\) follows immediately from the definition of \(V_N\opt\). We now prove the second inequality. Let the sequence \((x^g_t)_{t=1}^N\) be defined by 
			\[
				x^g_{t} = \begin{cases}
					x	& \text{if \(t = 0\)},\\
					f\bigl(x^g_{t-1}, g(x^g_{t-1}), w_{t-1}\bigr) & \text{if \(t=1, \ldots, N\)}.
				\end{cases}
			\]
			In view of \ref{e:stabilizing condition}, we have
			\begin{align*}
				c_F(x) - V_N\opt(x) & = c_F(x) - c(x, g(x)) - V_N\opt(x) + c(x, g(x))\\
					& \ge \EE\bigl[c_F\circ f(x, g(x), w_0)\bigr] - V_N\opt(x) + c(x, g(x)) - b\\
					& = \EE\bigl[c_F(x^g_1) - c(x^g_1, g(x^g_1))\,\big|\, x^g_0 = x\bigr] - V_N\opt(x)\\
					& \qquad + \EE\bigl[c(x^g_1, g(x^g_1))\,\big|\, x^g_0 = x\bigr] + c(x^g_0, g(x^g_0)) - b\\
					& \ge \EE\bigl[c_F(x^g_2) \,\big|\, x^g_0 = x\bigr] - V_N\opt(x)\\
					& \qquad + \sum_{\ell=0}^1 \EE\bigl[c(x^g_\ell, g(x^g_\ell))\,\big|\, x^g_0 = x\bigr] - 2b\\
					& \vdots\\
					& \ge \EE\bigl[c_F(x^g_N)\,\big|\, x^g_0 = x\bigr] - V_N\opt(x) - Nb\\
					& \qquad + \sum_{\ell=0}^{N-1} \EE\bigl[c(x^g_\ell, g(x^g_\ell))\,\big|\, x^g_0 = x\bigr].
			\end{align*}
			Since \(\overset{N\text{ times}}{\overbrace{(g, \ldots, g)}}\) is a sub-optimal policy for the optimal control problem \eqref{e:MPC problem},
			\[
				V_N\opt(x) \le \EE\biggl[\sum_{\ell=0}^{N-1} c(x^g_\ell, g(x^g_\ell)) + c_F(x^g_N) \,\bigg|\, x^g_0 = x\biggr],
			\]
			and therefore, \(c_F(x) - V_N\opt(x) \ge -Nb\). Since \(x\) is arbitrary, the assertion follows.
		\end{proof}

		\begin{proof}[Proof of Proposition \ref{p:geometric drift}]
			Fix \(x\in\R^d\). We know from Theorem \ref{t:drift} that
			\[
				\EE^{\rhpolicy}_x\bigl[V_N\opt(x_1)\bigr] - V_N\opt(x) \le -c(x, \pi_0\opt(x)) + b.
			\]
			By hypothesis, \(c(x, \pi_0\opt(x)) = c_s(x) + c_c(\pi_0\opt(x)) \ge c_s(x)\), and if \(x\not\in K\), then \(c_s(x) \ge \alpha c_F(x)\). Thus,
			\[
				\EE^{\rhpolicy}_x\bigl[V_N\opt(x_1)\bigr] - V_N\opt(x) \le -c_s(x) + b \le -\alpha c_F(x) + b\quad\text{if }x\not\in K.
			\]
			In view of \eqref{e:V vs final cost}, we see that if \(x\not\in K\), then \(-\alpha c_F(x) + b \le -\alpha V_N\opt(x) + b(1 + \alpha N)\). In other words,
			\[
				\EE^{\rhpolicy}_x\bigl[V_N\opt(x_1)\bigr] - V_N\opt(x) \le -\alpha V_N\opt(x) + b (1 + \alpha N)\quad\text{for all \(x\not\in K\)}.
			\]
			Since \(\lim_{\norm{z}\to+\infty} c_s(z) = +\infty\), our hypotheses show that \(\lim_{\norm{z}\to+\infty} c(z, \pi_0\opt(z)) = +\infty\), and from \eqref{e:V vs final cost} it follows that \(\lim_{\norm{z}\to+\infty} V_N\opt(z) = +\infty\). By definition of a limit, therefore, there must exist an closed ball \(K'\) around \(0\in\R^d\) of radius large enough, such that \(V_N\opt(z) \ge 2(\alpha\inverse + N)\) for all \(z\not\in K'\). Substituting back we see that
			\[
				\EE^{\rhpolicy}_x\bigl[V_N\opt(x_1)\bigr] - V_N\opt(x) \le -\frac{\alpha}{2} V_N\opt(x)\quad\text{for all }x\not\in K',
			\]
			which is a geometric drift condition outside the compact set \(K'\). The particular case follows from Proposition \ref{p:FL classical}.
		\end{proof}

		\begin{proof}[Proof of Proposition \ref{p:ortho stability}]
			Fix \(t\in\Nz\). The state recursion \eqref{e:linsys} shows that
			\begin{align*}
				x_{\reachindex(t+1)} & = A^\reachindex x_{\reachindex t} + \reachab(A, B) \pmat{u_{\reachindex t}\\ \vdots\\ u_{\reachindex(t+1)-1}} + \reachab(A, I_d) \pmat{w_{\reachindex t}\\ \vdots \\ w_{\reachindex(t+1)-1}}\\
				& \teL A^\reachindex x_{\reachindex t} + \reachab(A, B) \ol{u_{\reachindex t}} + \reachab(A, I_d) \ol{w_{\reachindex t}}.
			\end{align*}
			Let \(\tilde\lambda \Let \rho - U_{\max}\) and let \(\lambda_\circ \Let \epower{\tilde\lambda}\); by hypothesis, \(\tilde\lambda < 0\) and \(\lambda_\circ \in\:]0, 1[\). On the event \(\{\norm{x_{\reachindex t}} > 2R\}\), we have
			\begin{align*}
				& \frac{\EE\bigl[\epower{\norm{x_{\reachindex (t+1)}}}\,\big|\, x_{\reachindex t}\bigr]}{\epower{\norm{x_{\reachindex t}}}}\\
				& \quad = \EE\Bigl[ \exp\bigl( \norm{A^\reachindex x_{\reachindex t} + \reachab(A, B) \ol{u_{\reachindex t}} + \reachab(A, I_d) \ol{w_{\reachindex t}}} - \norm{x_{\reachindex t}}\bigr) \,\Big|\, x_{\reachindex t}\Bigr]\\
				& \quad = \epower{\norm{A^\reachindex x_{\reachindex t} - \sat_{U_{\max}}(A^\reachindex x_{\reachindex t})} - \norm{A^\reachindex x_{\reachindex t}}}	\EE\bigl[\epower{\norm{\reachab(A, I_d) \ol{w_{\reachindex t}}}} \,\big|\, x_{\reachindex t}\bigr]\quad\text{since \(A\) is orthogonal}\\
				& \quad = \epower{-U_{\max}} \EE\!\left[\exp\!\left(\norm{\reachab(A, I_d) \pmat{w_0\\ \vdots \\ w_{\reachindex -1}}}\right)\right] \quad\text{since \((w_s)_{s\in\Nz}\) is i.i.d.}\\
				& \quad = \epower{-U_{\max} + \rho}\\
				& \quad = \lambda_\circ < 1 \quad \text{by hypothesis}.
			\end{align*}
			Since \(t\) is arbitrary, the first claim follows. By hypothesis, the control actions \(\pmat{u_{\reachindex t}\transp & \cdots & u_{\reachindex (t+1)-1}\transp}\transp\) depend only on \(x_{\reachindex t}\) for each \(t\in\Nz\). Thus, the process \((x_{\reachindex t})_{t\in\Nz}\) is Markovian under the control actions in \eqref{e:ortho stability:controls}, as can be seen by directly verifying \eqref{e:Markovian proof}. By Proposition \ref{p:FL classical} we see that for each \(x\in\R^d\) the sequence \(\bigl(\EE_x\bigl[\epower{\norm{x_{\reachindex t}}}\bigr]\bigr)_{t\in\Nz}\) is bounded. It remains to move from the \(\reachindex\)-subsampled process \((x_{\reachindex t})_{t\in\Nz}\) to the original process \((x_t)_{t\in\Nz}\). But standard arguments, along the lines of the proof of the main theorem in \cite{ref:RamChaMilHokLyg-10}, employing the triangle inequality and monotonicity of the function \(\epower{(\cdot)}\) shows that the sequence \(\bigl(\EE_x\bigl[\epower{\norm{x_{t}}}\bigr]\bigr)_{t\in\Nz}\) is bounded.

			For \(x\in\R^d\) let \(C = C(x) > 0\) be such that \(\sup_{t\in\Nz} \EE_x\bigl[\epower{\norm{x_t}}\bigr] \le C\). Then for all \(t\in\Nz\),
			\begin{align*}
				C	& \ge \EE_x\bigl[\epower{\norm{x_t}}\bigr] = \EE_x\biggl[ \int_0^{\norm{x_t}} \epower{r}\,\drv r\biggr] + 1\\
					& = \EE_x\biggl[\int_0^{+\infty} \indic{\{\norm{x_t} > r\}} \epower{r}\,\drv r\biggr] + 1\\
					& = \int_0^{+\infty} \PP_x\bigl(\norm{x_t} > r\bigr) \epower{r} \,\drv r + 1\quad \text{by Fubini's theorem}.
			\end{align*}
			This shows that \(\PP_x(\norm{x_t} > r)\) must decay, for large values of \(r\), faster than \(\epower{-r}\) uniformly for all \(t\in\Nz\), and the assertion follows.
		\end{proof}

	\section{}
	\label{s:app:perf}
		This appendix contains our proof of Theorem \ref{t:ergodic cost}. For two policies \(\pi_{0:k_1-1}\) and \(\pi'_{0:k_2-1}\) of length \(k_1\) and \(k_2\), respectively, we define their concatenation 
		\[
			\pi_{0:k_1-1}\join\pi'_{0:k_2-1} \Let (\pi_0, \ldots, \pi_{k_1-1}, \pi'_0, \ldots, \pi'_{k_2-1}).
		\]

		\begin{proof}[Proof of Theorem \ref{t:ergodic cost}]
			\ref{t:ergodic cost:ineq}: We adapt certain ideas from \cite{ref:recstrat} to the context of long-run expected average cost. Suppose that Assumption \ref{a:opt existence} holds, and fix \(n\in\N\). Conditional on \(x_{n+1} = x'\in\R^d\), by definition of optimality,
			\begin{align*}
				& \EE^{\pi\opt_{1:N-1}\join \tilde g} \biggl[ \sum_{\ell=n+1}^{n+N} c(x_\ell, u_\ell) + c_F(x_{n+N+1})\,\bigg|\, x_{n+1} = x' \biggr]\\
				& \quad \ge \EE^{\pi_{0:N-1}\opt}\biggl[ \sum_{\ell=n+1}^{n+N} c(x_\ell, u_\ell) + c_F(x_{n+N+1})\,\bigg|\, x_{n+1} = x' \biggr]
			\end{align*}
			Conditional on \(x_n = y\), therefore,
			\begin{align*}
				& \EE^{\pi_0\opt\join\pi_{1:N-1}\opt\join \tilde g}\biggl[ \biggl( \sum_{\ell=n}^{n+N-1} c(x_\ell, u_\ell) + c_F(x_{n+N}) \biggr) - c(x_n, u_n) \,\bigg|\, x_n = y \biggr]\\
				& \quad + \EE^{\pi_0\opt\join\pi_{1:N-1}\opt\join \tilde g}\Bigl[ c_F(x_{n+N+1}) - c_F(x_{n+N}) + c(x_{n+N}, u_{n+N}) \,\Big|\, x_n = y\Bigr]\\
				& \qquad \ge \EE^{\pi_0\opt\join\pi_{0:N-1}}\biggl[ \sum_{\ell=n+1}^{n+N} c(x_\ell, u_\ell) + c_F(x_{n+N+1})\,\bigg|\, x_n = y\biggr],
			\end{align*}
			which implies
			\begin{align*}
				& \EE^{\pi_{0:N-1}\opt\join \tilde g}\biggl[ \sum_{\ell=n}^{n+N-1} c(x_i, u_i) + c_F(x_{n+N})\,\bigg|\, x_n = y\biggr]\\
				& \quad + \EE^{\pi_{0:N-1}\opt\join \tilde g}\Bigl[ c_F(x_{n+N+1}) - c_F(x_{n+N}) + c(x_{n+N}, u_{n+N})\,\Big|\, x_n = y\Bigr]\\
				& \qquad \ge c(y, \pi_0\opt(y)) + \EE^{\pi_0\opt\join\pi_{0:N-1}\opt}\biggl[\sum_{\ell=n+1}^{n+N} c(x_i, u_i) + c_F(x_{n+N+1}))\,\bigg|\, x_n = y\biggr].
			\end{align*}
			Rearranging terms, conditional on \(x_n = y\), we get
			\begin{align*}
				c(y, \pi_0\opt(y)) & \le V_N\opt(y) - \EE^{\pi_0\opt}\bigl[ V_N\opt(x_{n+1})\,\big|\, x_n = y\bigr]\\
					& \quad + \EE^{\pi\opt}\Bigl[ c(x_{n+N}, \tilde g(x_{n+N})) - c_F(x_{n+N})\\
					& \qquad \qquad + \EE\bigl[ c_F\circ f(x_{n+N}, \tilde g(x_{n+N}), w_{n+N})\,\big|\, x_{n+N}\bigr]\,\Big|\, x_n = y\Bigr].
			\end{align*}
			Suppose now that the receding horizon policy \(\rhpolicy\) is applied. Since the closed-loop process \((x_t)_{t\in\Nz}\) under \(\rhpolicy\) is Markovian, taking expectations under the policy \(\rhpolicy\), we get
			\begin{align*}
				\EE^{\rhpolicy}_x\bigl[ c(x_n, \pi_0\opt(x_n))\bigr]	& \le \EE^{\rhpolicy}_x\bigl[V_N\opt(x_n)\bigr] - \EE^{\rhpolicy}_x\bigl[V_N\opt(x_{n+1})\bigr] + \EE^{\rhpolicy}_x\bigl[ \EE^{\pi\opt}\bigl[T_{\tilde g}(x_{n+N})\,\big|\, x_n\bigr]\bigr],
			\end{align*}
			whence, summing from \(n = 0\) through \(n = k\) we arrive at
			\begin{align*}
				\EE^{\rhpolicy}_x\biggl[ \sum_{n=0}^k c(x_n, \pi_0\opt(x_n))\biggr]	& \le V_N\opt(x) - \EE^{\rhpolicy}_x\bigl[V_N\opt(x_{k+1})\bigr] + \sum_{n=0}^k \EE^{\rhpolicy}_x\bigl[ \EE^{\pi\opt}\bigl[ T_{\tilde g}(x_{n+N})\,\big|\, x_n\bigr]\bigr],
			\end{align*}
			as asserted.

			\ref{t:ergodic cost:bound} Suppose that Assumption \ref{a:feedback} holds, and let \(\tilde g = g\). Fix \(x\in\R^d\). From \eqref{e:ergodic cost:ineq}, it follows that
			\begin{align*}
				\frac{1}{k+1}	& \EE^{\rhpolicy}_x\biggl[ \sum_{n=0}^k c(x_n, \pi_0\opt(x_n))\biggr]\\
					& \le \frac{1}{k+1}\Bigl( V_N\opt(x) - \EE^{\rhpolicy}_x\bigl[V_N\opt(x_{k+1})\bigr]\Bigr) + \frac{1}{k+1}\sum_{n=0}^k \EE^{\rhpolicy}_x\bigl[ \EE^{\pi\opt}\bigl[ T_g(x_{n+N})\,\big|\, x_n\bigr]\bigr]\\
					& \le \frac{1}{k+1}\Bigl(V_N\opt(x) - \EE^{\rhpolicy}_x\bigl[V_N\opt(x_{k+1})\bigr]\Bigr) + b\quad \text{by Assumption \ref{a:feedback}},
			\end{align*}
			which implies that
			\begin{align*}
				\limsup_{k\to+\infty} \frac{1}{k+1}\EE^{\rhpolicy}_x\biggl[ \sum_{n=0}^k c(x_n, \pi_0\opt(x_n))\biggr] & \le \limsup_{k\to+\infty} \frac{V_N\opt(x) - \EE^{\rhpolicy}_x\bigl[V_N\opt(x_{k+1})\bigr]}{k+1} + b\\
					& = b\quad\text{by hypothesis}.
			\end{align*}
			The final claim follows at once from the preceding.
		\end{proof}
\bigskip

\def\cprime{$'$}

\bigskip
\bigskip

\end{document}